\newcommand{\be}{\begin{equation}}
\newcommand{\ee}{\end{equation}}
\newcommand{\ba}{\begin{aligned}}
\newcommand{\ea}{\end{aligned}}
\newcommand{\R}{\mathbb{R}}
\newcommand{\bc}{\begin{center}}
\newcommand{\ec}{\end{center}}
\newcommand{\beq}{\begin{equation}}
\newcommand{\eeq}{\end{equation}}
\newcommand{\beqq}{\begin{equation*}}
\newcommand{\eeqq}{\end{equation*}}
\newcommand{\beqa}{\begin{align}}
\newcommand{\eeqa}{\end{align}}
\newcommand{\barr}{\begin{array}}
\newcommand{\earr}{\end{array}}
\newcommand{\bi}{\begin{itemize}}
\newcommand{\ei}{\end{itemize}}
\newcommand{\C}{\mathbb{C}}
\newtheorem{lem}{Lemma}
\newtheorem{theo}{Theorem}
\DeclareMathOperator{\N}{\mathbb{N}}
\DeclareMathOperator{\I}{\mathbb{I}}
\begin{document}

%\preprint{arXiv:2501.$\cdots$}

\title{Interplay of resources for universal continuous-variable quantum computing}

\author{Varun Upreti}
 \email{varun.upreti@inria.fr}
\author{Ulysse Chabaud}%
\email{ulysse.chabaud@inria.fr}
\affiliation{%
 DIENS, \'Ecole Normale Sup\'erieure, PSL University, CNRS, INRIA, 45 rue d’Ulm, Paris, 75005, France
}%

\date{\today}% It is always \today, today,
             %  but any date may be explicitly specified

\begin{abstract}
    Quantum resource theories identify the features of quantum computers that provide their computational advantage over classical systems. We investigate the resources driving the complexity of classical simulation in the standard model of continuous-variable quantum computing, and their interplay enabling computational universality. Specifically, we uncover a new property in continuous-variable circuits, analogous to coherence in discrete-variable systems, termed \textit{symplectic coherence}. Using quadrature propagation across multiple computational paths, we develop an efficient classical simulation algorithm for continuous-variable computations with low symplectic coherence. This establishes symplectic coherence as a necessary resource for universality in continuous-variable quantum computing, alongside non-Gaussianity and entanglement. Via the Gottesman--Kitaev--Preskill encoding, we show that the interplay of these three continuous-variable quantum resources mirrors the discrete-variable relationship between coherence, magic, and entanglement.
\end{abstract}

%\keywords{Suggested keywords}%Use showkeys class option if keyword
                              %display desired
\maketitle

%\tableofcontents

%----------------------------------------------------------------------------------------------------%

\section{Introduction}\label{sec:introduction}

Universal quantum computers may offer significant advantages over classical counterparts \cite{shor1994algorithms,Aaronson2013,Ronnow2014,Harrow2017,Boixo2018}. Understanding the origin of these advantages is crucial for advancing quantum technologies. This requires identifying scenarios where classical simulation of quantum computations—replicating quantum behavior with classical algorithms—becomes inefficient \cite{Shi2006,Huang2015,Yung2019,Zhou2020,Xu2023}. Quantum resource theory provides a powerful framework for addressing this challenge \cite{Howard2017,Chitambar2018,albarelli2018resource,Takagi2019,Amaral2019,thomas2024}, revealing in the context of quantum computing how the absence or limited presence of specific quantum features, known as quantum resources, enables efficient classical simulation, and thus shedding light on the requirements for quantum computational advantages.

\begin{figure}[ht!]
    \centering
    \includegraphics[scale = 0.42]{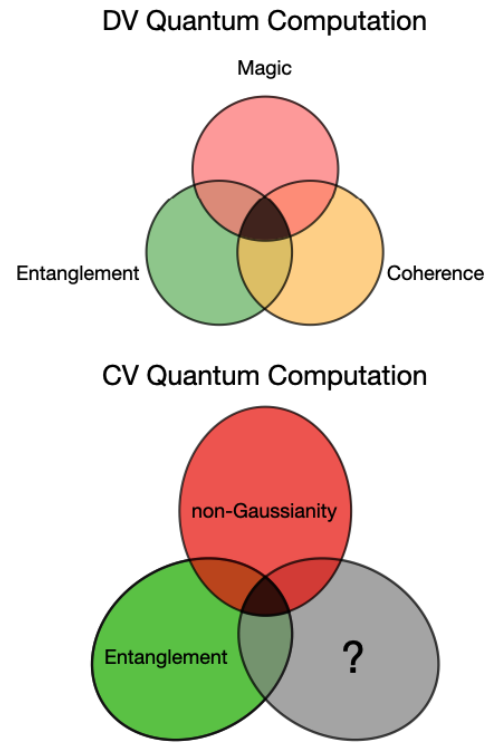}
    \caption{Interplay of resources that determine the complexity of classical simulation of quantum computations, with the black Reuleaux triangle in the middle indicating the regime of universal quantum computation. While universality in DV quantum computations can be seen as resulting from the interplay of magic, entanglement and coherence, whether such an interplay of resources can be seen in a universal CV quantum computation has remained an open question.}
    \label{fig:intro}
\end{figure}

For discrete-variable (DV) quantum computations, such as those based on qubits, quantum computational advantage can be understood as relying on the interplay of three resources: magic, entanglement, and coherence \cite{Horodecki2009,Baumgratz2014,Howard2017,Streltsov2017,Leone2022}. The absence of any one of these resources in a quantum computation enables efficient classical simulation, e.g., using methods based on the Gottesman--Knill theorem \cite{gottesman1998,Aaronson2004} or matrix product states \cite{Vidal2003}.

Continuous-variable (CV) quantum computations \cite{Braunstein2005, Ferraro2005, Weedbrook2012} operate instead on infinite-dimensional Hilbert spaces, where operators such as position and momentum have a continuous spectrum. These computations are implemented in photonic and superconducting systems, platforms which offer enhanced noise robustness \cite{Furusawa1998} and enable unprecedented levels of entanglement \cite{Yokoyama2013}. Notably, these platforms have achieved the experimental milestone of quantum error correction beyond the break-even point using Gottesman–Kitaev–Preskill (GKP) encoding \cite{Gottesman2001,sivak2023}.

\begin{figure*}
    \centering
    \includegraphics[width=0.9\linewidth]{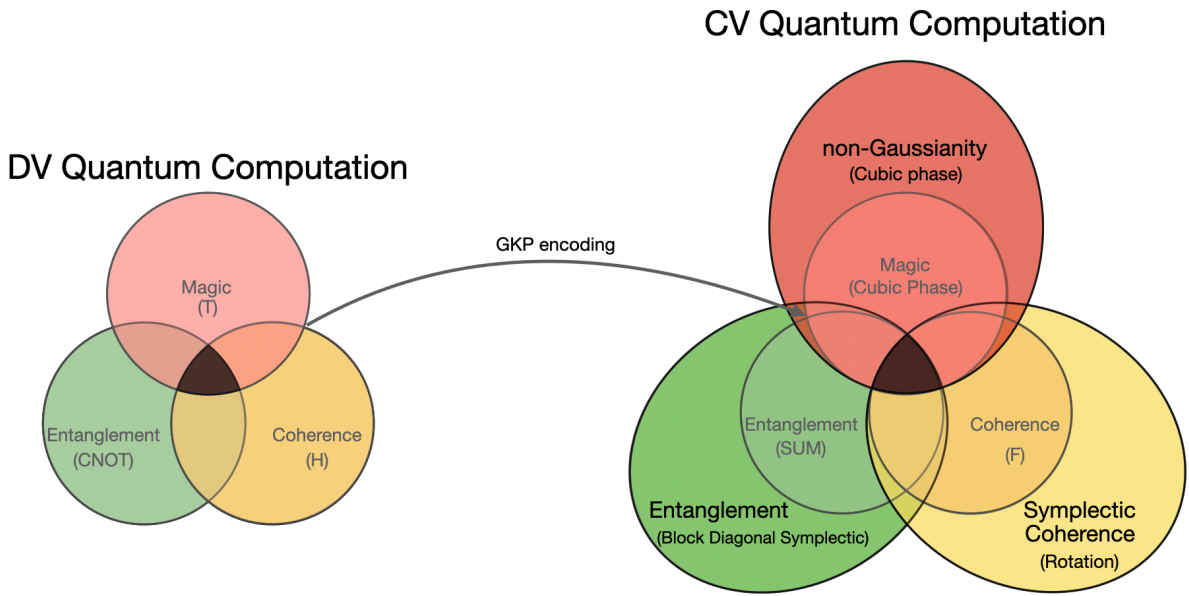}
    \caption{The interplay of CV non-Gaussianity, entanglement, and symplectic coherence generalizes the resource interplay of DV magic, entanglement, and coherence in GKP-encoded quantum computations. Between parentheses are quantum gates inducing each computational resources. The circles on the right hand side denote the resource interplay in the GKP-encoded space within general CV quantum computations. While \textsf{T}, \textsf{CNOT}, \textsf{H}, cubic phase, \textsf{SUM}, and \textsf{F} gates are standard quantum gates in the literature, ``block-diagonal symplectic'' refers to Gaussian unitary gates whose symplectic matrices have a block-diagonal form, i.e., they map position quadratures to position quadratures and momentum quadratures to momentum quadratures, and ``rotation'' refers to gates that may mix position and momentum quadratures within a single mode.}
    \label{fig:conclusion}
\end{figure*}

The GKP encoding maps DV quantum states and operations from a finite-dimensional Hilbert space to a (closure of a) subspace of the infinite-dimensional Hilbert space through the use of logical CV quantum states and operations. As such, it bridges the two computational paradigms and enables established DV simulation techniques to be applied to GKP-encoded CV quantum computations \cite{Alvarez2020, Calcuth2023}.

For general CV quantum computations beyond GKP-encoded ones, the subclass of Gaussian computations is known to be efficiently simulatable on classical computers \cite{Bartlett2002}, underpinning non-Gaussianity as a key quantum resource alongside entanglement \cite{adesso2007entanglement}. Various measures have been developed to quantify non-Gaussianity \cite{Bartlett2002,kenfack2004negativity,Gehrke2012,sperling2015convex,albarelli2018resource,chabaud2020stellar,Marshall2023}, and its role in the complexity of classical simulation has been extensively studied \cite{Mari2012,pashayan2015estimating,chabaud2021simulation,Bourassa2021,chabaud2023resources,Marshall2023,frigerio2024}. However, how the interplay of multiple quantum resources prevents efficient classical simulation and enables computational universality in the CV setting remains underexplored (see Figure \ref{fig:intro}).

In this work, we bridge this gap by identifying three physical resources whose interplay determines the complexity of CV quantum computations: non-Gaussianity, entanglement, and a new resource which we call \textit{symplectic coherence}. Informally, this latter resource is responsible for mixing position and momentum quadratures in CV quantum computations.

We show that symplectic coherence is indeed a quantum computational resource, by introducing a new classical algorithm for simulating CV quantum computations which is efficient for circuits with low symplectic coherence. This algorithm is based on evolving CV quantum operators over simultaneous computational paths (Figure \ref{fig:t-1_orthogonal_reduction}), analogous to recent simulation techniques based on evolution over Pauli paths introduced in the DV setting  \cite{Aharonov2023,fontana2023,angrisani2024,schuster2024,angrisani2025simulating,martinez2025efficient}, but largely unexplored in the CV regime. As a consequence, the absence of any one of the three resources---non-Gaussianity, entanglement, or symplectic coherence---leads to efficient classical simulation, while their combined presence enables universal quantum computation.

Through the GKP encoding, we show that this CV resource interplay can be interpreted as a generalization of the interplay of quantum computational resources observed in DV systems: CV entanglement naturally generalizes DV entanglement, CV non-Gaussianity plays the role of DV magic, as recent literature suggests \cite{Yamasaki2020, Hahn2022, hahn2024}, and CV symplectic coherence corresponds to DV coherence. This result relies on identifying the resources brought by each quantum gate in universal gate sets in DV computations, GKP-encoded CV computations, and general CV computations (see Figure \ref{fig:conclusion}).

%the reduction of squeezing present in a Gaussian unitary using the result in \cite{Sefi2011}.

%analyzing the interplay of quantum computational resources introduced by the gates that determine the complexity of simulation in the standard model of CV quantum computation (CVQC) introduced by Lloyd and Braunstein \cite{Lloyd1999}, consisting of Gaussian gates plus one non-Gaussian gate.

The rest of paper is structured as follows: Section \ref{sec:preliminaries} introduces preliminary definitions and notations. Section \ref{sec:symp_coherence} introduces the concept of symplectic coherence and explores its role in enabling universal continuous-variable quantum computation (CVQC). Section \ref{sec:symp_computational} formally establishes symplectic coherence as a computational resource through a new algorithm for efficient classical simulation of CV computations with low symplectic coherence. Section \ref{sec:GKP_interplay} connects resource interplay in DV and CV quantum computations through the GKP encoding, and establishes symplectic coherence as a CV analogue to coherence in the context of quantum computations. Finally, Section \ref{sec:conclusion} highlights the significance of our results and the introduced formalism, and future research directions.

%----------------------------------------------------------------------------------------------------%

\section{Preliminaries}\label{sec:preliminaries}

Hereafter, the sets $\N, \R$ and $\C$ are the set of natural, real, and complex numbers respectively, with a * exponent when $0$ is removed from the set.
We refer the reader to \cite{NielsenChuang} for quantum information theory material, and to \cite{Braunstein2005,Ferraro2005,Weedbrook2012} for background on CV quantum information, which is the branch of quantum information dealing with observables taking a continuous spectrum of values. A mode refers to the degree of freedom associated with a specific quantum field in a CV system, such as single spatial or frequency mode of light, and is the equivalent of qubit in the CV regime. In this paper, $m \in \N^*$ denotes the number of modes in the system, and $\ket0$ is the vacuum state. $\hat{a}$ and $\hat{a}^\dagger$ refer to the single-mode annihilation and creation operator, respectively, satisfying $[\hat{a},\hat{a}^\dagger] = \I$. $\hat{a}$ and $\hat{a}^\dagger$ are related to the position and momentum quadrature operators as
\begin{comment}
     $\{\ket{n} \}_{n\in\N}$ and $\{\ket{\alpha} \}_{\alpha\in\C}$ denote the single-mode Fock basis and the single-mode coherent state basis, respectively, with
     with $S_{q_i,p_j}$ describing the coefficient of $\hat{p}_j$ in the transformation of $\hat{q}_i$ under $\hat{G}$, and similarly for other quadratures.
\end{comment}
\begin{equation}
    \hat{q} = \hat{a} + \hat{a}^\dagger, \hspace{5mm}\hat{p} = -i(\hat{a} - \hat{a}^\dagger),
\end{equation} 
with the convention $\hbar = 2$. Furthermore, $\hat{q}$ and $\hat{p}$ satisfy the commutation relation $[\hat{q},\hat{p}] = 2i\I$. The particle number operator is given as $\hat{n} = \hat{a}^\dagger \hat{a}$.

Product of unitary operations generated by Hamiltonians that are quadratic in the quadrature operators of the modes are called Gaussian unitary operations, and states produced by applying a Gaussian unitary operation to the vacuum state are Gaussian states. The action of an $m$-mode Gaussian unitary operation $\hat{G}$ on the vector of quadratures $\boldsymbol{\Gamma}= [\hat{q}_1,\dots,\hat{q}_m,\hat{p}_1,\dots,\hat{p}_m]$ is given by
\begin{equation}\label{eqn:symplectic_transform_quadrature}
    \hat{G}^\dagger \boldsymbol{\Gamma} \hat{G} = S \boldsymbol{\Gamma} + \boldsymbol{d},
\end{equation}
where $S$ is a $2m \times 2m$ symplectic matrix transforming quadratures under $\hat{G}$, and $\boldsymbol{d} \in \R^m$ is a displacement vector. Passive linear operations, induced by beam-splitters and phase shifters in quantum optics, are Gaussian transformations that preserve the total particle number. Among those, orthogonal gates $\hat{O}$ are passive linear unitary operations whose associated symplectic matrix written in the quadrature basis $\boldsymbol{\Gamma} = [\hat{q}_1,\dots,\hat{q}_m,\hat{p}_1,\dots,\hat{p}_m]$ can be put in a block-diagonal form with two  $m\times m$ orthogonal matrices. In particular, they map position quadratures to position quadratures and momentum quadratures to momentum quadratures. On the other hand, rotation gates $\hat{R}$ are passive linear unitary operators that can mix position and momentum quadratures within a mode. Single-mode displacement operators are $\hat{D}(\alpha) = e^{\alpha \hat{a}^\dagger - \alpha^* \hat{a}}$, with $\alpha \in \C$, whereas the shearing gate is given by $e^{is\hat{q}^2}$, with $s\in \R$. 
The Bloch--Messiah decomposition \cite{Ferraro2005} states that any multimode Gaussian unitary gate can be decomposed into a product of orthogonal gates, rotation (or phase-shift) gates, displacement gates and shearing gates.

Non-Gaussian gates are important for enabling quantum advantage since Gaussian gates acting on Gaussian states can be efficiently simulated classically \cite{Bartlett2002}. One prominent example of a non-Gaussian gate is the cubic phase gate $e^{i\gamma\hat{q}^3}$; its action on the quadratures is given by \cite{Budinger2024}
\begin{eqnarray}\label{eq:evo_cubic}
    e^{-i\gamma\hat{q}^3} \hat{q} e^{i\gamma\hat{q}^3}   &=& \hat{q}, \hspace{5mm} e^{-i\gamma\hat{q}^3}  \hat{p}  e^{i\gamma\hat{q}^3}    = \hat{p} + 3\gamma \hat{q}^2.
\end{eqnarray}

A universal gate set comprises gates capable of performing any computation within a given quantum computational model. In the seminal model of continuous-variable quantum computation (CVQC) introduced by Lloyd and Braunstein \cite{Lloyd1999}, the universal gate set consists of Gaussian unitary gates and a single non-Gaussian gate acting on an input vacuum state. Within this framework, universality is defined as the ability to implement any unitary that can be generated by the specific family of polynomial Hamiltonians, i.e., polynomial functions of the position and momentum quadrature operators. 
While this definition may seem restrictive at first sight, recent work demonstrates that any physical unitary operation can be approximated to arbitrary precision by a unitary evolution generated by such a polynomial Hamiltonian \cite{arzani2025}. Since a combination of Gaussian unitary gates and a cubic phase gate can approximate any unitary characterized by a polynomial Hamiltonian with arbitrary precision \cite{Lloyd1999,Sefi2011}, they indeed constitute a universal gate set.
By the Bloch--Messiah decomposition of Gaussian unitary gates, a universal gate set for CVQC can therefore be given as \cite{Menicucci2006}
\begin{equation}
    \{e^{i\gamma\hat{q}_1^3},\hat{O},\hat{R},\hat{D},e^{is\hat{q}_1^2}\},
\end{equation}
where $\hat{O}$, $\hat{R}$, and $\hat{D}$ refer to an orthogonal gate, a rotation gate and a displacement gate, respectively. Note that cubic phase gates and shearing gates are applied on the first mode without loss of generality, up to an orthogonal (SWAP) gate. Further, following \cite{Sefi2011}, each of the shearing gates can be decomposed as displacement gates and cubic phase gates. This reduces the universal gate set for CVQC to
\begin{equation}\label{eqn:ugs}
    \{e^{i\gamma\hat{q}
    _1^3},\hat{O},\hat{R},\hat{D}\}.
\end{equation} 

Note that GKP-encoded computations naturally fit within this CV quantum computing model, as the gate set for this encoding is generated by polynomial Hamiltonians, and the required input GKP states can be approximated up to arbitrary precision using unitary gates generated by polynomial Hamiltonians \cite{arzani2025}.
There is much to be said about the GKP encoding, but in this work we mainly use it as a tool to bridge the DV and CV computational paradigms. The interested reader can refer to \cite{Gottesman2001,Albert2018} for more details on the GKP encoding. In particular, under this encoding, the \textsf{T}, \textsf{CNOT}, and \textsf{H} gate are mapped to the cubic phase gate ($e^{i\gamma\hat{q}^3}$), the $\textsf{SUM}$ gate ($e^{i \frac{\hat{q}_1\hat{p}_2}{2}}$) \cite{Menicucci2006}, and the Fourier transform $\textsf{F}$ gate $(e^{i\frac{\pi}{2}\hat{n}})$, respectively. 

With the necessary material in place, we now define the central concept of this work:\textit{ symplectic coherence}.

%----------------------------------------------------------------------------------------------------%

\section{Symplectic coherence}\label{sec:symp_coherence}

The evolution of the quantum state of a physical system is governed by a unitary operator which characterizes the system's dynamics. One way to describe this unitary evolution is by specifying its action on the position and momentum quadratures. Gaussian unitary operators transform each quadrature as a symplectic and affine combination of the others, redistributing the information encoded in the state. In contrast, non-Gaussian unitary operators, such as the cubic phase gate in Eq.~(\ref{eq:evo_cubic}), alter the nature of the information by introducing non-linearities.

Focusing on Gaussian unitary operators, recall from Eq.~(\ref{eqn:symplectic_transform_quadrature}) that their action on the quadrature operators is described by a symplectic matrix $S$ and a displacement vector $\bm d$. Hereafter, a Gaussian unitary  operator is said to exhibit \textit{symplectic coherence} if its associated symplectic matrix, expressed in the basis $\boldsymbol{\Gamma} = [\hat{q}_1, \dots, \hat{q}_m, \hat{p}_1, \dots, \hat{p}_m] $, is not in a block-diagonal form consisting of two $m \times m$ matrices, i.e., if it mixes the position and momentum quadratures. The term ``symplectic coherence'' draws an analogy to quantum coherence, which requires the presence of off-diagonal terms in the density matrix of quantum states (in a specific basis). So while a beam-splitter, with efficiency $\eta$ and symplectic matrix
\begin{equation}
    \begin{bmatrix}
        \sqrt{\eta} && \sqrt{1 - \eta} && 0 && 0 \\
        -\sqrt{1 - \eta} && \sqrt{\eta} && 0 && 0 \\
        0 && 0 && \sqrt{\eta} && \sqrt{1 - \eta} \\
        0 && 0 && -\sqrt{1 - \eta} && \sqrt{\eta}
    \end{bmatrix},
\end{equation}
with no non-zero off-block-diagonal terms, lacks symplectic coherence, a Fourier gate with symplectic matrix
\begin{equation} 
\begin{bmatrix}
    0 && 1 \\
    -1 && 0 \\
\end{bmatrix},
\end{equation}
with non-zero off-block-diagonal terms, exhibits it. We define Gaussian unitaries with no non-zero off-block-diagonal terms in their associated symplectic matrix as ``block-diagonal symplectic'' Gaussian unitaries. This terminology aligns with the block-diagonal Gaussian states introduced in \cite{adesso2006generic,Serafini_2007}, which are Gaussian states without direct correlations between position and momentum quadratures. The block-diagonal Gaussian states in \cite{adesso2006generic,Serafini_2007} motivate the exploration of symplectic coherence as a resource in quantum states, defined as the presence of off-diagonal blocks in their covariance matrix. Hereafter, however, we focus on symplectic coherence induced by Gaussian unitary gates.

To understand how this property arises in the context of universal CVQC, we recall the universal gate set for CVQC considered in this work:
\begin{equation}
    \{e^{i\gamma\hat{q}
    _1^3},\hat{O},\hat{R},\hat{D}\}.
\end{equation}
In this gate set, the only Gaussian gates which may introduce symplectic coherence are rotation gates. This raises the question of how the presence of rotation gates inducing symplectic coherence may affect our ability to classically simulate CVQC. The next section addresses this question and demonstrates that the absence of symplectic coherence in a CVQC enables its efficient classical simulation, establishing symplectic coherence as a new computational resource.

%----------------------------------------------------------------------------------------------------%

\section{Symplectic coherence as a quantum computational resource}\label{sec:symp_computational}

In this section, we consider the problem of computing the expectation value of a Hamiltonian that is polynomial in position and momentum quadratures (a prominent example being the total particle number operator), at the output of a universal CV quantum circuit: given $m,t,d\ge0$ and the descriptions of an $m$-mode input state $\rho$ evolving over an $m$-mode unitary circuit
\begin{figure}
    \centering
    \includegraphics[width=\linewidth]{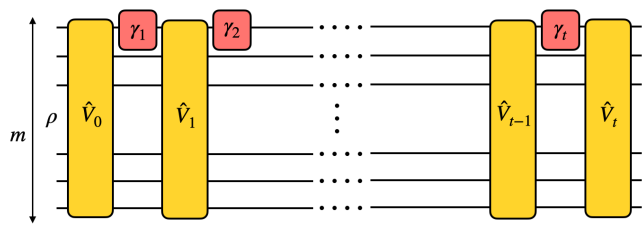}
    \caption{Universal circuit for CV quantum computations. $\gamma_1,\dots,\gamma_t$ are cubic phase gates, whereas $\hat{V}_0,\dots,\hat{V}_t$ are $m$-mode passive linear unitary gates, up to a displacement.}
    \label{fig:og}
\end{figure}
\begin{equation}
    \hat{O}_{\bm{\gamma}} = \hat{V}_t e^{i\gamma_t \hat{q}_1^3} \hat{V}_{t-1} \cdots \hat{V}_1 e^{i\gamma_1 \hat{q}_1^3} \hat{V}_{0},
\end{equation}
where $\bm\gamma=(\gamma_1,\dots,\gamma_t)$, and where $\hat{V}_0,\dots,\hat{V}_t$ are passive linear unitary gates up to a multimode displacement (see Figure \ref{fig:og}), what is the cost of classically computing the expectation value of an operator $H(\hat{\bm{q}},\hat{\bm{p}}) := H(\hat{q}_1,\dots,\hat{q}_m,\hat{p}_1,\dots,\hat{p}_m)$, polynomial of degree $d$ in position and momentum quadratures for the output quantum state $\hat{O}_{\bm{\gamma}}\rho \hat{O}_{\bm{\gamma}}^\dag$ of the circuit? Note that $\hat{O}_{\bm{\gamma}}$ describes a universal circuit for CVQC since displaced passive linear unitary gates (involving orthogonal, rotation and displacement gates) together with cubic phase gates form a universal gate set for CVQC according to Eq.~(\ref{eqn:ugs}).

This problem is generally hard to solve classically: for instance, evolving the position quadrature $\hat{q}_1$ through $t$ cubic phase gates with cubicity $\gamma$ interleaved with Fourier gates produces terms with coefficients of the form $\hat{q}_1^{2^t}$, where repeated squaring after each layer leads to doubly exponential growth \cite{chabaud2024complexity}. Similarly, quadrature evolution in circuits with passive linear unitary gates and cubic phase gates typically results in doubly exponentially large numbers, making classical simulation inefficient. 
%This intuition is further supported by the scaling behavior with respect to rotation gates in Theorem \ref{theo:classical_simulation_expectation_values_small_symp}.
However, without symplectic coherence, i.e., if $\hat{V}_0,\dots,\hat{V}_t$ are displaced orthogonal gates denoted by $\hat{O}_0,\dots,\hat{O}_t$, we show that the degree and number of terms remains tractable, and that the circuit can be classically simulated efficiently. This is summarized by the following result:
\begin{theo}\label{theo:classical_simulation_expectation_values_orthogonal}
    Given the descriptions of an $m$-mode input state $\rho$ and an $m$-mode unitary $\hat{O}_{\bm{\gamma}}$ given by
    \begin{equation}\label{eq:full_circ}
    \hat{O}_{\bm{\gamma}} = \hat{O}_t e^{i\gamma_t\hat{q}_1^3} \hat{O}_{t-1} \cdots \hat{O}_1 e^{i\gamma_1\hat{q}_1^3}\hat{O}_0,
    \end{equation}
    where $\hat{O}_0,\dots,\hat{O}_{t}$ are displaced orthogonal gates, the expectation value of an operator $H(\bm{\hat{q}},\bm{\hat{p}})$, polynomial of degree $d$ in the quadrature operators, given by 
\begin{equation}
    \Tr[\hat{O}_{\bm{\gamma}}\rho\hat{O}_{\bm{\gamma}}^\dagger H(\bm{\hat{q}},\bm{\hat{p}}) ] = \Tr[\rho\hat{O}_{\bm{\gamma}}^\dagger H(\bm{\hat{q}},\bm{\hat{p}}) \hat{O}_{\bm{\gamma}}],
\end{equation}
can be computed in time $\mathcal{O}( m^{3d} + t^2 m^6)$.
\end{theo}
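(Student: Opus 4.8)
The plan is to pass to the Heisenberg picture and exploit the fact that displaced orthogonal gates act block-diagonally, i.e., map position quadratures to position quadratures and momentum quadratures to momentum quadratures. Setting $\hat{Q}_i := \hat{O}_{\bm\gamma}^\dagger \hat{q}_i \hat{O}_{\bm\gamma}$ and $\hat{P}_i := \hat{O}_{\bm\gamma}^\dagger \hat{p}_i \hat{O}_{\bm\gamma}$, and using that conjugation by a unitary is an algebra homomorphism, the problem reduces by cyclicity of the trace to
\begin{equation}
    \Tr[\hat{O}_{\bm\gamma}\rho\hat{O}_{\bm\gamma}^\dagger H(\hat{\bm q},\hat{\bm p})] = \Tr[\rho\, H(\hat{\bm Q},\hat{\bm P})].
\end{equation}
It then suffices to express the evolved quadratures $\hat{Q}_i,\hat{P}_i$ as explicit polynomials in the original quadratures, substitute them into $H$, reduce to a canonical operator ordering, and contract the resulting polynomial with the quadrature moments of $\rho$ (which I assume are supplied in, or efficiently computable from, the description of $\rho$).

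The core of the argument is a structural lemma, proved by induction over the $t$ layers: each $\hat{Q}_i$ is an affine combination of the position quadratures $\hat{q}_1,\dots,\hat{q}_m$ alone, while each $\hat{P}_i$ is the sum of an affine combination of the momentum quadratures $\hat{p}_1,\dots,\hat{p}_m$ and a polynomial of degree at most $2$ in the position quadratures. The inductive step uses exactly two facts. First, a displaced orthogonal gate sends position quadratures to affine combinations of position quadratures and momentum quadratures to affine combinations of momentum quadratures, so it preserves the degree of whatever polynomial it conjugates; this is precisely the absence of symplectic coherence. Second, the cubic phase gate fixes every position quadrature and maps $\hat{p}_1 \mapsto \hat{p}_1 + 3\gamma\hat{q}_1^2$ by Eq.~(\ref{eq:evo_cubic}), so it only augments the momentum-linear part of $\hat{P}_i$ by a single degree-$2$ position term. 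The crucial point, and the main conceptual obstacle, is to see that these $\hat{q}_1^2$ terms stay degree $2$ forever: because the orthogonal gates never feed momentum dependence back into the position quadratures, the quadratic terms generated at successive cubic phase gates merely accumulate additively and never compound. This is exactly where the hypothesis is used; in the general rotation case, $\hat{q}_1$ would acquire momentum dependence, so each later cubic phase gate would square an already-quadratic expression, producing the doubly-exponential degree growth flagged above.

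Given the degree bound $\deg\hat{Q}_i=1$, $\deg\hat{P}_i\le 2$, substituting into the degree-$d$ polynomial $H$ produces $H(\hat{\bm Q},\hat{\bm P})$ of degree at most $2d$, which lives in the $O(m^{2d})$-dimensional space of degree-$\le 2d$ monomials. For the running time: $H$ has $O(m^d)$ monomials, and computing the image of each—replacing each factor by its ($O(m^2)$-term) evolved quadrature, expanding, collecting into the degree-$\le 2d$ basis, and reordering via the canonical commutation relations (the evolved quadratures inherit the same commutation relations by unitarity, keeping reordering controlled)—costs $O(m^{2d})$, for a total of $O(m^{3d})$. Contracting the final $O(m^{2d})$-term polynomial against the degree-$\le 2d$ moments of $\rho$ adds $O(m^{2d})$. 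It remains to account for computing the evolved quadratures themselves, obtained by propagating the affine coefficient matrices together with the $m$ position-quadratic forms through the $t$ layers via the associated matrix operations, which a careful bookkeeping bounds by $O(t^2 m^6)$. Summing the two dominant contributions yields the claimed $\mathcal{O}(m^{3d}+t^2 m^6)$.
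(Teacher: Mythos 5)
Your proposal is correct and follows essentially the same route as the paper: back-propagate each quadrature in the Heisenberg picture, use block-diagonality to show positions stay affine in positions while momenta accumulate only additive degree-$2$ position terms from the cubic phase gates, then substitute into $H$, expand, and contract with the moments of $\rho$, with the same $\mathcal{O}(m^{3d})$ and $\mathcal{O}(t^2m^6)$ accounting. The only difference is presentational: the paper regroups your additive formula $\hat{O}_{t\cdots0}^\dagger\hat{p}\hat{O}_{t\cdots0}+\sum_{i}3\gamma_i(S_i\cdots S_t)_{p,p_1}\hat{O}_{i-1\cdots0}^\dagger\hat{q}_1^2\hat{O}_{i-1\cdots0}$ into a weighted sum over $t$ single-cubic-gate ``computational paths,'' which you derive directly by induction without the path packaging.
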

\noindent The proof of Theorem \ref{theo:classical_simulation_expectation_values_orthogonal} is given in Appendix \ref{appendixsec:proof_classical_simulation_orthogonal}. It relies on back-propagating single quadrature operators through the circuit $\hat{O}_{\bm{\gamma}}$, which are then combined to get an efficient description of $\hat{O}_{\bm{\gamma}}^\dagger H(\bm{\hat{q}},\bm{\hat{p}}) \hat{O}_{\bm{\gamma}}$, i.e., the back-propagated version of $H({\bm{\hat q}},{\bm{\hat p}})$. This back-propagation of single quadratures is made efficient based on a new classical simulation technique, where instead of evolving a single quadrature over the full unitary circuit in Eq.~(\ref{eq:full_circ}), we simultaneously evolve it over $t$ simpler circuits (computational paths), each involving the displaced orthogonal unitary gates $\hat{O}_0,\dots,\hat{O}_t$, together with a single cubic phase gate, and we sum the results. In this sum over paths, the cubicity of the single cubic phase gate is the same across all paths, and the contribution of each path reflects the fraction of the cubicity brought by the original cubic phase gates in the circuit (see Figure \ref{fig:t-1_orthogonal_reduction}). 
Our quadrature back-propagation technique is formalized in Theorem \ref{theo:t-1_orthogonal_reduction} of the appendix. It is reminiscent of a classical simulation technique for DV quantum computations based on Pauli back-propagation \cite{Aharonov2023,fontana2023,angrisani2024,schuster2024,angrisani2025simulating,martinez2025efficient}, which to the best of our knowledge has not been applied in the CV setting until our work.

Theorem \ref{theo:classical_simulation_expectation_values_orthogonal} establishes symplectic coherence as a computational resource, as its absence enables efficient classical simulation of quantum expectation values of constant-degree polynomial Hamiltonians.
Physically, this result can be understood as follows: a cubic phase gate introduces non-Gaussianity by adding a quadratic term to the momentum quadrature ($\hat{p} \rightarrow \hat{p} + 3\gamma\hat{q}^2$). If the displaced passive linear unitary gates do not mix $\hat{q}$ and $\hat{p}$, the nonlinearity remains confined to a quadratic term in the position quadrature, remaining unaffected by subsequent cubic phase gates. In other words, the displaced passive linear unitary gates do not amplify the non-Gaussianity. However, when passive linear unitary gates do mix $\hat{q}$ and $\hat{p}$, inducing symplectic coherence, the nonlinearity grows with successive layers, making the evolution intractable. 

%Multiplying these evolved single degree quadratures, we get the description evolved form of $H(\hat{\bm{q}},\hat{\bm{p}})$ as a polynomial of degree $2d$. 

\begin{figure}
    \centering
    \includegraphics[scale = 0.35]{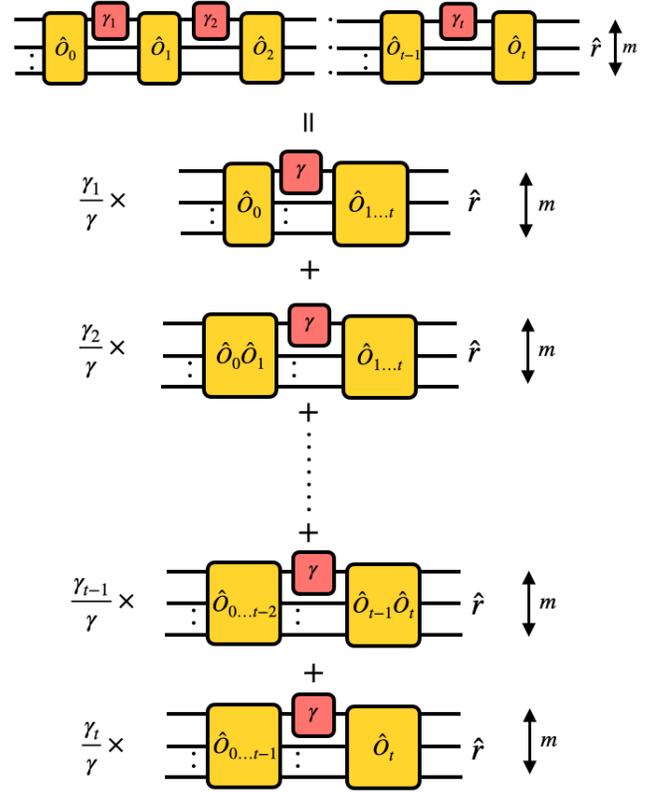}
    \caption{Quadrature back-propagation in the absence of symplectic coherence,  where $\hat{r} \in \{\hat{q}_1,\dots,\hat{q}_m,\hat{p}_1,\dots,\hat{p}_m\}$. The evolution (through back-propagation) of a quadrature over a circuit with a displaced orthogonal unitary between cubic phase gates can be seen as simultaneous evolution over $t$ circuits with a single cubic phase gate and displaced passive linear unitary. Here $\gamma = \sum_{i=1}^t \gamma_i$ and $\hat{O}_{i\dots j} = \hat{O}_i\dots \hat{O}_j$, for $j > i$ and $j,i \in \{0,\dots,t \}$.}
    \label{fig:t-1_orthogonal_reduction}
\end{figure}

We have shown how CV quantum computations without symplectic coherence may be simulated classically. We now further explore whether circuits with low but non-zero symplectic coherence can also be efficiently simulated, similar to CV circuits with low non-Gaussianity \cite{Mari2012,Pashayan2020,chabaud2021simulation,chabaud2023resources} or low entanglement \cite{Nagai2021}. We address this in the following theorem:
\begin{theo}\label{theo:classical_simulation_expectation_values_small_symp}
    Given the description of an $m$-mode input state $\rho$ and an $m$-mode unitary $\hat{U}_{t}$ given by
    \begin{equation}\label{eq:circuit_gen}
    \hat{U}_t = \hat{O}_{\bm{\gamma}_c} \hat{R}_1(\theta_c) \hat{O}_{\bm{\gamma}_{c-1}}\cdots \hat{R}_1(\theta_1) \hat{O}_{\bm{\gamma}_0},
    \end{equation}
    evolving the input state, where $\hat{O}_{\bm{\gamma}_i}$ describes $t + 1$ displaced orthogonal gates interleaved with $t$ cubic phase gates as in Theorem~\ref{theo:classical_simulation_expectation_values_orthogonal}, and $\hat{R}_1(\theta_i)$ are single-mode rotation gates in the first mode $\forall i \in \{0,1,\dots,c \}$, the expectation value of an operator $H(\bm{\hat{q}},\bm{\hat{p}})$, polynomial of degree $d$ in the quadrature operators, given by 
\begin{equation}
    \Tr[\hat{U}_t\rho\hat{U}_t^\dagger H(\bm{\hat{q}},\bm{\hat{p}}) ] = \Tr[\rho\hat{U}_t^\dagger H(\bm{\hat{q}},\bm{\hat{p}}) \hat{U}_t],
\end{equation}
can be computed in time $\mathcal{O}(m^{d2^{c+1}}+(c+1)t^2 m^7)$.
\end{theo}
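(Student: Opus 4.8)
The plan is to extend the back-propagation argument of Theorem~\ref{theo:classical_simulation_expectation_values_orthogonal} to the case of $c$ rotation gates by conjugating the Hamiltonian $H(\bm{\hat q},\bm{\hat p})$ through the circuit block by block, and to track how each rotation amplifies the polynomial degree. Writing
\begin{equation}
    \hat{U}_t^\dagger H \hat{U}_t = \hat{O}_{\bm\gamma_0}^\dagger \hat{R}_1(\theta_1)^\dagger \cdots \hat{O}_{\bm\gamma_c}^\dagger\, H\, \hat{O}_{\bm\gamma_c}\cdots \hat{R}_1(\theta_1)\hat{O}_{\bm\gamma_0},
\end{equation}
I would conjugate $H$ successively, alternating between the coherence-free blocks $\hat{O}_{\bm\gamma_i}$ and the single-mode rotations $\hat{R}_1(\theta_i)$, maintaining at each stage an explicit representation of the current operator as a polynomial in the quadratures.

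The first key step is a \emph{degree-doubling} estimate for a single block: back-propagating a degree-$D$ polynomial through one block $\hat{O}_{\bm\gamma_i}$ yields a polynomial of degree at most $2D$. This follows from the core mechanism of Theorem~\ref{theo:classical_simulation_expectation_values_orthogonal} (formalized in Theorem~\ref{theo:t-1_orthogonal_reduction}): within a coherence-free block each single quadrature back-propagates to a polynomial of degree at most $2$, since the cubic phase gate only adds a $\hat q_1^2$ term to $\hat p_1$ and the orthogonal gates never remix $\hat q$ into $\hat p$. Writing a degree-$D$ monomial as an ordered product of $D$ quadratures and inserting resolutions of the identity, its back-propagation factorizes into $D$ single-quadrature back-propagations, each of degree at most $2$, hence total degree at most $2D$ (commutators only lowering the degree). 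Each rotation gate $\hat{R}_1(\theta_i)$, being linear, preserves the degree, but it mixes $\hat q_1$ and $\hat p_1$, so that position-type terms acquire a momentum component that the next block can again promote to degree two. Consequently the degree at most doubles across each of the $c+1$ blocks, giving a final degree bounded by $d\,2^{c+1}$ for the fully back-propagated operator $\hat{U}_t^\dagger H \hat{U}_t$.

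The complexity then splits into two contributions. For each of the $c+1$ blocks I would precompute the back-propagation of all $2m$ single quadratures through that block---a computation depending only on the block and not on the accumulated polynomial---at a per-block cost of order $t^2 m^7$ (the single-block cost of Theorem~\ref{theo:classical_simulation_expectation_values_orthogonal} with an additional factor $m$ from handling the full set of $2m$ quadratures and the interleaved rotation), for a total of $\mathcal{O}((c+1)t^2 m^7)$. Substituting these degree-$\le 2$ expressions (together with the linear rotation substitutions) into the running polynomial and collecting terms is dominated by the size of the largest intermediate representation; since a degree-$d2^{c+1}$ polynomial in $2m$ quadratures has $\mathcal{O}(m^{d2^{c+1}})$ monomials and the intermediate monomial counts $m^{d2^{i}}$ grow so fast that the final term dominates, this contributes $\mathcal{O}(m^{d2^{c+1}})$. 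Finally, evaluating $\Tr[\rho\,\hat{U}_t^\dagger H \hat{U}_t]$ from the moments of $\rho$ up to degree $d2^{c+1}$ costs $\mathcal{O}(m^{d2^{c+1}})$, yielding the claimed $\mathcal{O}(m^{d2^{c+1}}+(c+1)t^2 m^7)$.

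The main obstacle I anticipate is the degree-growth bookkeeping: one must verify carefully that the blow-up is genuinely multiplicative-by-two per block---so that the exponent is $2^{c+1}$ rather than something larger---which requires controlling operator ordering so that non-commutativity contributes only lower-degree corrections, and ensuring the per-block single-quadrature back-propagation cost stays independent of the accumulated degree. The secondary subtlety is the polynomial-arithmetic accounting needed to show that repeated substitution and term-collection remain bounded by the final output size $m^{d2^{c+1}}$ rather than by a product of intermediate sizes.
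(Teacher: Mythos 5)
Your proposal is correct and follows essentially the same route as the paper's proof: both rest on the per-block degree-doubling mechanism (each rotation reintroduces a $\hat p_1$ component that the next block's cubic gates promote to a quadratic in $\hat q_1$, while the displaced orthogonal gates never remix quadratures), combined with the single-quadrature back-propagation and path-summation of Theorem~\ref{theo:classical_simulation_expectation_values_orthogonal}. The only difference is organizational---the paper first evolves each of the $2m$ single quadratures through the full circuit to degree $2^{c+1}$ and multiplies them at the end to form the evolved $H$, whereas you carry the whole polynomial through block by block with a degree-$D\to 2D$ estimate---but the final degree bound $d\,2^{c+1}$ and the cost accounting $\mathcal{O}(m^{d2^{c+1}}+(c+1)t^2m^7)$ come out the same.
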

\noindent Note that rotations in Eq.~(\ref{eq:circuit_gen}) act on the first mode without loss of generality, as the rotation gates on other modes can be shifted to the first one by an orthogonal (\textsf{SWAP}) gate, which can then be absorbed into the corresponding block of displaced orthogonal unitary gates.
The proof of Theorem \ref{theo:classical_simulation_expectation_values_small_symp} is detailed in Appendix \ref{appendixsec:proof_symp}.
As for Theorem \ref{theo:classical_simulation_expectation_values_orthogonal}, it proceeds by back-propagating $2m$ single quadrature operators through the unitary circuits, and multiplying them quadratures to obtain the back-propagated version of $H(\bm{\hat{q}},\bm{\hat{p}})$ under the unitary. However, during the back-propagation, the degree of a single-mode quadrature doubles at every layer due to the rotation gate mixing position and momentum quadratures, leading to a degree of $2^{c+1}$ after $c+1$ layers. At each layer, we sum over all computational paths to express the evolution as a single polynomial, using Theorem \ref{theo:classical_simulation_expectation_values_orthogonal}. This results in a total of $(c+1)\times t$ computational paths, allowing each final back-propagated quadrature operator to be obtained in time $\mathcal{O}(m^{3 \cdot 2^c} + (c+1)t^2 m^7)$. 
%(applying Theorem \ref{theo:classical_simulation_expectation_values_orthogonal} at each step prevents the time complexity from scaling as $t^{2c}$ instead of $(c+1)t^2$ in the second term.) 
Finally, multiplying the back-propagated quadratures introduces an extra time complexity resulting in the one given in the theorem.

In particular, when  $d = \mathcal{O}(1)$, $t = \mathcal{O}(\mathrm{poly}\,m)$ and $c = \mathcal{O}(1)$, the circuit remains efficiently simulable. This demonstrates that CV quantum circuits with low symplectic coherence can also be simulated efficiently classically, thereby strengthening the role of symplectic coherence as a computational resource in CVQC.

%----------------------------------------------------------------------------------------------------%

\section{CV vs DV resource interplay}\label{sec:GKP_interplay}

Based on the gate set $\{e^{i\gamma\hat{q}_1^3},\hat{O},\hat{R},\hat{D}\}$ in Eq.~(\ref{eqn:ugs}), we have found that the computational complexity of universal CV quantum circuits arises from the interplay of three computational resources: non-Gaussianity (introduced by non-Gaussian gates such as the cubic phase gate), entanglement (generated by block-diagonal symplectic gates such as orthogonal gates), and symplectic coherence (induced by rotation gates, such as the Fourier gate). Note that, while we considered orthogonal gates as the one inducing entanglement in the universal gate set, any block-diagonal Gaussian unitary gate can be taken as entanglement-inducing gate without changing the complexity of the simulation algorithms obtained in the previous section. While displacement gates are also necessary for universal CV quantum computations, they are not expected to impact the complexity of simulating CVQC as their action is analogous to applying a ``classical'' force that shifts the position and momentum quadratures of a quantum harmonic oscillator, without introducing fundamentally quantum effects.

It is natural to ask how this interplay of CV resources relates to that observed in DV quantum computations. In the latter case, computational universality stems from the interaction of magic (induced by gates such as \textsf{T}), entanglement (induced by gates such as \textsf{CNOT}), and coherence (induced by gates such as Hadamard) (see left hand side of Figure \ref{fig:conclusion}). The GKP encoding, which maps DV states and operations to a CV subspace, provides a useful way of comparing the DV and CV resource interplays. Within the GKP code space, the emergence of computational complexity can be understood directly using the DV resource-theoretic framework, but extending this interplay beyond GKP-encoded CVQC is less straightforward. Indeed, while the connections between CV non-Gaussianity and DV magic \cite{Yamasaki2020, Hahn2022, hahn2024}, and between mode-entanglement and logical qubit entanglement \cite{Menicucci2006}, are well-understood within the GKP encoding \cite{Gottesman2001}, a CV computational resource analogous to DV coherence is missing. As we shall explain in what follows, symplectic coherence fills this gap.

To better understand the relationship between the DV and CV resource interplays, we consider each DV resource, specify the corresponding DV gate inducing it within the gate set $\{\textsf{T},\textsf{CNOT},\textsf{H}\}$, obtain its GKP-encoded counterpart, and identify the CV family of gates this GKP-encoded gate belongs to. Finally, we give the CV resource tied to that family of gates. The goal is to determine which CV computational resource is uniquely provided by each gate in the universal GKP-encoded gate set. For magic, the associated DV gate is the \textsf{T} gate, which maps to the cubic phase gate in the GKP-encoded model. The cubic phase gate, in turn, introduces non-Gaussianity in CV computations. Next, entanglement in DV systems, introduced by the \textsf{CNOT} gate, corresponds to the two-mode \textsf{SUM} gate in the GKP encoding. The symplectic matrix of the \textsf{SUM} gate in the quadrature basis $[\hat{q}_1,\hat{q}_2,\hat{p}_1,\hat{p}_2]$ is \cite{Gottesman2001} 
\begin{equation}
    \textsf{SUM} = \begin{bmatrix}
        1 & 0 & 0 & 0 \\
        1 & 1 & 0 & 0 \\
        0 & 0 & 1 & -1 \\
        0 & 0 & 0 & 1
    \end{bmatrix}.
\end{equation}
Therefore, the \textsf{SUM} gate belongs to the class of block-diagonal symplectic Gaussian unitary gates and induces entanglement in CV computations. Finally, coherence in DV systems, induced by the Hadamard gate, corresponds to the Fourier transform (\textsf{F}) gate in the GKP encoding. The symplectic matrix of $\textsf{F}$ gate in the quadrature basis $[\hat{q}_1,\hat{{p}}_1]$ is given by:
\begin{equation}
    \textsf{F} = \begin{bmatrix}
        0 & 1 \\
        -1 & 0
    \end{bmatrix},
\end{equation}
Therefore, \textsf{F} gate belongs to the class of rotation gates and induces symplectic coherence in CV computations.

In conclusion, the interplay of magic, entanglement, and coherence in GKP-encoded DV quantum computations maps to that of non-Gaussianity, entanglement, and symplectic coherence in CV computations, identifying symplectic coherence as a CV analogue of coherence in the context of universal quantum computations, as summarised in Figure \ref{fig:conclusion}.

%----------------------------------------------------------------------------------------------------%

\section{Conclusion}\label{sec:conclusion}

We have introduced symplectic coherence, a physical property of a unitary responsible for mixing the position and momentum quadratures of the modes. After identifying the role of rotation gates in inducing it, we have demonstrated its importance as a resource for universal CVQC. Specifically, we have derived an efficient simulation algorithm for computing expectation values at the output of CV circuits of low symplectic coherence, i.e. with a constant number of rotation gates. Our proof leverages a novel CV path simulation technique based on quadrature back-propagation. Finally, by connecting CVQC and DVQC using the GKP encoding, we have explained how symplectic coherence completes the framework for understanding the interplay of computational resources enabling universal CVQC, summarised in Figure \ref{fig:conclusion}. Through the GKP encoding, we showed that the CV resource interplay generalizes the DV resource interplay, with CV entanglement, non-Gaussianity, and symplectic coherence corresponding to DV entanglement, magic, and coherence, respectively.
 
Our work opens up several research directions. Future research could investigate whether the absence of symplectic coherence enables efficient classical sampling and strong simulation, on top of expectation value simulation. Moreover, in the DV regime, circuits with at most a logarithmic number of coherence-inducing gates, such as Hadamard gates (or Fourier gates in GKP-encoded computations), are known to be efficiently simulatable \cite{Braun2006,Hillery2016,Shi2017,thomas2024}; we have proven that the same holds for CV quantum circuits with at most a constant number of gates inducing symplectic coherence. Whether a similar result can be shown to hold for a logarithmic number of such gates remains open. We expect this to be the case with an improved simulation algorithm, but it is beyond this paper's scope. Moreover, the path simulation formalism introduced here has provided new insights into quantum complexity in the noiseless case. Given its success in analyzing noisy DV circuits \cite{Aharonov2023,fontana2023,schuster2024,angrisani2025simulating,martinez2025efficient}, extending it to noisy CVQC is a natural next step. Finally, another direction is to formalize symplectic coherence for quantum states and operations within a resource-theoretic framework, and study its operational role beyond quantum computing. We leave these questions for future exploration.

%----------------------------------------------------------------------------------------------------%

\section{Acknowledgments}
We acknowledge inspiring discussions with  F.\ A.\ Mele, S.\ Oliviero, A.\ Angrisani, and Z.\ Holmes, and funding from the European Union’s Horizon Europe Framework Programme (EIC Pathfinder Challenge project Veriqub) under Grant Agreement No.\ 101114899.

%----------------------------------------------------------------------------------------------------%
\bibliography{apssamp}% Produces the bibliography via BibTeX.

%apsrev4-2.bst 2019-01-14 (MD) hand-edited version of apsrev4-1.bst
%Control: key (0)
%Control: author (8) initials jnrlst
%Control: editor formatted (1) identically to author
%Control: production of article title (0) allowed
%Control: page (0) single
%Control: year (1) truncated
%Control: production of eprint (0) enabled
\begin{thebibliography}{69}%
\makeatletter
\providecommand \@ifxundefined [1]{%
 \@ifx{#1\undefined}
}%
\providecommand \@ifnum [1]{%
 \ifnum #1\expandafter \@firstoftwo
 \else \expandafter \@secondoftwo
 \fi
}%
\providecommand \@ifx [1]{%
 \ifx #1\expandafter \@firstoftwo
 \else \expandafter \@secondoftwo
 \fi
}%
\providecommand \natexlab [1]{#1}%
\providecommand \enquote  [1]{``#1''}%
\providecommand \bibnamefont  [1]{#1}%
\providecommand \bibfnamefont [1]{#1}%
\providecommand \citenamefont [1]{#1}%
\providecommand \href@noop [0]{\@secondoftwo}%
\providecommand \href [0]{\begingroup \@sanitize@url \@href}%
\providecommand \@href[1]{\@@startlink{#1}\@@href}%
\providecommand \@@href[1]{\endgroup#1\@@endlink}%
\providecommand \@sanitize@url [0]{\catcode `\\12\catcode `\$12\catcode `\&12\catcode `\#12\catcode `\^12\catcode `\_12\catcode `\%12\relax}%
\providecommand \@@startlink[1]{}%
\providecommand \@@endlink[0]{}%
\providecommand \url  [0]{\begingroup\@sanitize@url \@url }%
\providecommand \@url [1]{\endgroup\@href {#1}{\urlprefix }}%
\providecommand \urlprefix  [0]{URL }%
\providecommand \Eprint [0]{\href }%
\providecommand \doibase [0]{https://doi.org/}%
\providecommand \selectlanguage [0]{\@gobble}%
\providecommand \bibinfo  [0]{\@secondoftwo}%
\providecommand \bibfield  [0]{\@secondoftwo}%
\providecommand \translation [1]{[#1]}%
\providecommand \BibitemOpen [0]{}%
\providecommand \bibitemStop [0]{}%
\providecommand \bibitemNoStop [0]{.\EOS\space}%
\providecommand \EOS [0]{\spacefactor3000\relax}%
\providecommand \BibitemShut  [1]{\csname bibitem#1\endcsname}%
\let\auto@bib@innerbib\@empty
%</preamble>
\bibitem [{\citenamefont {Shor}(1994)}]{shor1994algorithms}%
  \BibitemOpen
  \bibfield  {author} {\bibinfo {author} {\bibfnamefont {P.~W.}\ \bibnamefont {Shor}},\ }\bibfield  {title} {\bibinfo {title} {Algorithms for quantum computation: discrete logarithms and factoring},\ }in\ \href {https://doi.org/10.1109/sfcs.1994.365700} {\emph {\bibinfo {booktitle} {Proceedings 35th annual symposium on foundations of computer science}}}\ (\bibinfo {organization} {Ieee},\ \bibinfo {year} {1994})\ pp.\ \bibinfo {pages} {124--134}\BibitemShut {NoStop}%
\bibitem [{\citenamefont {Aaronson}\ and\ \citenamefont {Arkhipov}(2013)}]{Aaronson2013}%
  \BibitemOpen
  \bibfield  {author} {\bibinfo {author} {\bibfnamefont {S.}~\bibnamefont {Aaronson}}\ and\ \bibinfo {author} {\bibfnamefont {A.}~\bibnamefont {Arkhipov}},\ }\bibfield  {title} {\bibinfo {title} {The computational complexity of linear optics},\ }\href {https://doi.org/10.1145/1993636.1993682} {\bibfield  {journal} {\bibinfo  {journal} {Theory of Computing}\ }\textbf {\bibinfo {volume} {9}},\ \bibinfo {pages} {143} (\bibinfo {year} {2013})},\ \Eprint {https://arxiv.org/abs/arXiv:1011.3245} {arXiv:1011.3245} \BibitemShut {NoStop}%
\bibitem [{\citenamefont {R{\o}nnow}\ \emph {et~al.}(2014)\citenamefont {R{\o}nnow}, \citenamefont {Wang}, \citenamefont {Job}, \citenamefont {Boixo}, \citenamefont {Isakov}, \citenamefont {Wecker}, \citenamefont {Martinis}, \citenamefont {Lidar},\ and\ \citenamefont {Troyer}}]{Ronnow2014}%
  \BibitemOpen
  \bibfield  {author} {\bibinfo {author} {\bibfnamefont {T.~F.}\ \bibnamefont {R{\o}nnow}}, \bibinfo {author} {\bibfnamefont {Z.}~\bibnamefont {Wang}}, \bibinfo {author} {\bibfnamefont {J.}~\bibnamefont {Job}}, \bibinfo {author} {\bibfnamefont {S.}~\bibnamefont {Boixo}}, \bibinfo {author} {\bibfnamefont {S.~V.}\ \bibnamefont {Isakov}}, \bibinfo {author} {\bibfnamefont {D.}~\bibnamefont {Wecker}}, \bibinfo {author} {\bibfnamefont {J.~M.}\ \bibnamefont {Martinis}}, \bibinfo {author} {\bibfnamefont {D.~A.}\ \bibnamefont {Lidar}},\ and\ \bibinfo {author} {\bibfnamefont {M.}~\bibnamefont {Troyer}},\ }\bibfield  {title} {\bibinfo {title} {Defining and detecting quantum speedup},\ }\href {https://doi.org/10.1126/science.1252319} {\bibfield  {journal} {\bibinfo  {journal} {Science}\ }\textbf {\bibinfo {volume} {345}},\ \bibinfo {pages} {420} (\bibinfo {year} {2014})}\BibitemShut {NoStop}%
\bibitem [{\citenamefont {Harrow}\ and\ \citenamefont {Montanaro}(2017)}]{Harrow2017}%
  \BibitemOpen
  \bibfield  {author} {\bibinfo {author} {\bibfnamefont {A.}~\bibnamefont {Harrow}}\ and\ \bibinfo {author} {\bibfnamefont {A.}~\bibnamefont {Montanaro}},\ }\bibfield  {title} {\bibinfo {title} {Quantum computational supremacy},\ }\href {https://doi.org/10.1038/nature23458} {\bibfield  {journal} {\bibinfo  {journal} {Nature}\ }\textbf {\bibinfo {volume} {549}},\ \bibinfo {pages} {203} (\bibinfo {year} {2017})}\BibitemShut {NoStop}%
\bibitem [{\citenamefont {Boixo}\ \emph {et~al.}(2018)\citenamefont {Boixo}, \citenamefont {Isakov}, \citenamefont {Smelyanskiy}, \citenamefont {Babbush}, \citenamefont {Ding}, \citenamefont {Jiang}, \citenamefont {Bremner}, \citenamefont {Martinis},\ and\ \citenamefont {Neven}}]{Boixo2018}%
  \BibitemOpen
  \bibfield  {author} {\bibinfo {author} {\bibfnamefont {S.}~\bibnamefont {Boixo}}, \bibinfo {author} {\bibfnamefont {S.~V.}\ \bibnamefont {Isakov}}, \bibinfo {author} {\bibfnamefont {V.~N.}\ \bibnamefont {Smelyanskiy}}, \bibinfo {author} {\bibfnamefont {R.}~\bibnamefont {Babbush}}, \bibinfo {author} {\bibfnamefont {N.}~\bibnamefont {Ding}}, \bibinfo {author} {\bibfnamefont {Z.}~\bibnamefont {Jiang}}, \bibinfo {author} {\bibfnamefont {M.~J.}\ \bibnamefont {Bremner}}, \bibinfo {author} {\bibfnamefont {J.~M.}\ \bibnamefont {Martinis}},\ and\ \bibinfo {author} {\bibfnamefont {H.}~\bibnamefont {Neven}},\ }\bibfield  {title} {\bibinfo {title} {Characterizing quantum supremacy in near-term devices},\ }\href {https://doi.org/https://doi.org/10.1038/s41567-018-0124-x} {\bibfield  {journal} {\bibinfo  {journal} {Nature Physics}\ }\textbf {\bibinfo {volume} {14}},\ \bibinfo {pages} {595} (\bibinfo {year} {2018})}\BibitemShut {NoStop}%
\bibitem [{\citenamefont {Shi}\ \emph {et~al.}(2006)\citenamefont {Shi}, \citenamefont {Duan},\ and\ \citenamefont {Vidal}}]{Shi2006}%
  \BibitemOpen
  \bibfield  {author} {\bibinfo {author} {\bibfnamefont {Y.-Y.}\ \bibnamefont {Shi}}, \bibinfo {author} {\bibfnamefont {L.-M.}\ \bibnamefont {Duan}},\ and\ \bibinfo {author} {\bibfnamefont {G.}~\bibnamefont {Vidal}},\ }\bibfield  {title} {\bibinfo {title} {Classical simulation of quantum many-body systems with a tree tensor network},\ }\href {https://doi.org/10.1103/PhysRevA.74.022320} {\bibfield  {journal} {\bibinfo  {journal} {Phys. Rev. A}\ }\textbf {\bibinfo {volume} {74}},\ \bibinfo {pages} {022320} (\bibinfo {year} {2006})}\BibitemShut {NoStop}%
\bibitem [{\citenamefont {Huang}(2015)}]{Huang2015}%
  \BibitemOpen
  \bibfield  {author} {\bibinfo {author} {\bibfnamefont {Y.}~\bibnamefont {Huang}},\ }\emph {\bibinfo {title} {Classical simulation of quantum many-body systems}},\ \href {https://escholarship.org/uc/item/3ct7d8tf} {Ph.D. thesis},\ \bibinfo  {school} {UC Berkeley} (\bibinfo {year} {2015})\BibitemShut {NoStop}%
\bibitem [{\citenamefont {Yung}(2019)}]{Yung2019}%
  \BibitemOpen
  \bibfield  {author} {\bibinfo {author} {\bibfnamefont {M.-H.}\ \bibnamefont {Yung}},\ }\bibfield  {title} {\bibinfo {title} {Quantum supremacy: some fundamental concepts},\ }\href {https://doi.org/10.1093/nsr/nwy072} {\bibfield  {journal} {\bibinfo  {journal} {National Science Review}\ }\textbf {\bibinfo {volume} {6}},\ \bibinfo {pages} {22} (\bibinfo {year} {2019})}\BibitemShut {NoStop}%
\bibitem [{\citenamefont {Zhou}\ \emph {et~al.}(2020)\citenamefont {Zhou}, \citenamefont {Stoudenmire},\ and\ \citenamefont {Waintal}}]{Zhou2020}%
  \BibitemOpen
  \bibfield  {author} {\bibinfo {author} {\bibfnamefont {Y.}~\bibnamefont {Zhou}}, \bibinfo {author} {\bibfnamefont {E.~M.}\ \bibnamefont {Stoudenmire}},\ and\ \bibinfo {author} {\bibfnamefont {X.}~\bibnamefont {Waintal}},\ }\bibfield  {title} {\bibinfo {title} {What limits the simulation of quantum computers?},\ }\href {https://doi.org/10.1103/PhysRevX.10.041038} {\bibfield  {journal} {\bibinfo  {journal} {Phys. Rev. X}\ }\textbf {\bibinfo {volume} {10}},\ \bibinfo {pages} {041038} (\bibinfo {year} {2020})}\BibitemShut {NoStop}%
\bibitem [{\citenamefont {Xu}\ \emph {et~al.}(2023)\citenamefont {Xu}, \citenamefont {Benjamin}, \citenamefont {Sun}, \citenamefont {Yuan},\ and\ \citenamefont {Zhang}}]{Xu2023}%
  \BibitemOpen
  \bibfield  {author} {\bibinfo {author} {\bibfnamefont {X.}~\bibnamefont {Xu}}, \bibinfo {author} {\bibfnamefont {S.}~\bibnamefont {Benjamin}}, \bibinfo {author} {\bibfnamefont {J.}~\bibnamefont {Sun}}, \bibinfo {author} {\bibfnamefont {X.}~\bibnamefont {Yuan}},\ and\ \bibinfo {author} {\bibfnamefont {P.}~\bibnamefont {Zhang}},\ }\href {https://arxiv.org/abs/2302.08880} {\bibinfo {title} {A herculean task: Classical simulation of quantum computers}} (\bibinfo {year} {2023}),\ \Eprint {https://arxiv.org/abs/2302.08880} {arXiv:2302.08880 [quant-ph]} \BibitemShut {NoStop}%
\bibitem [{\citenamefont {Howard}\ and\ \citenamefont {Campbell}(2017)}]{Howard2017}%
  \BibitemOpen
  \bibfield  {author} {\bibinfo {author} {\bibfnamefont {M.}~\bibnamefont {Howard}}\ and\ \bibinfo {author} {\bibfnamefont {E.}~\bibnamefont {Campbell}},\ }\bibfield  {title} {\bibinfo {title} {Application of a resource theory for magic states to fault-tolerant quantum computing},\ }\href {https://doi.org/10.1103/PhysRevLett.118.090501} {\bibfield  {journal} {\bibinfo  {journal} {Phys. Rev. Lett.}\ }\textbf {\bibinfo {volume} {118}},\ \bibinfo {pages} {090501} (\bibinfo {year} {2017})}\BibitemShut {NoStop}%
\bibitem [{\citenamefont {Chitambar}\ and\ \citenamefont {Gour}(2019)}]{Chitambar2018}%
  \BibitemOpen
  \bibfield  {author} {\bibinfo {author} {\bibfnamefont {E.}~\bibnamefont {Chitambar}}\ and\ \bibinfo {author} {\bibfnamefont {G.}~\bibnamefont {Gour}},\ }\bibfield  {title} {\bibinfo {title} {Quantum resource theories},\ }\href {https://doi.org/10.1103/RevModPhys.91.025001} {\bibfield  {journal} {\bibinfo  {journal} {Rev. Mod. Phys.}\ }\textbf {\bibinfo {volume} {91}},\ \bibinfo {pages} {025001} (\bibinfo {year} {2019})}\BibitemShut {NoStop}%
\bibitem [{\citenamefont {Albarelli}\ \emph {et~al.}(2018)\citenamefont {Albarelli}, \citenamefont {Genoni}, \citenamefont {Paris},\ and\ \citenamefont {Ferraro}}]{albarelli2018resource}%
  \BibitemOpen
  \bibfield  {author} {\bibinfo {author} {\bibfnamefont {F.}~\bibnamefont {Albarelli}}, \bibinfo {author} {\bibfnamefont {M.~G.}\ \bibnamefont {Genoni}}, \bibinfo {author} {\bibfnamefont {M.~G.}\ \bibnamefont {Paris}},\ and\ \bibinfo {author} {\bibfnamefont {A.}~\bibnamefont {Ferraro}},\ }\bibfield  {title} {\bibinfo {title} {Resource theory of quantum non-{G}aussianity and {W}igner negativity},\ }\href {https://doi.org/10.1103/PhysRevA.98.052350} {\bibfield  {journal} {\bibinfo  {journal} {Physical Review A}\ }\textbf {\bibinfo {volume} {98}},\ \bibinfo {pages} {052350} (\bibinfo {year} {2018})}\BibitemShut {NoStop}%
\bibitem [{\citenamefont {Takagi}\ and\ \citenamefont {Regula}(2019)}]{Takagi2019}%
  \BibitemOpen
  \bibfield  {author} {\bibinfo {author} {\bibfnamefont {R.}~\bibnamefont {Takagi}}\ and\ \bibinfo {author} {\bibfnamefont {B.}~\bibnamefont {Regula}},\ }\bibfield  {title} {\bibinfo {title} {General resource theories in quantum mechanics and beyond: Operational characterization via discrimination tasks},\ }\href {https://doi.org/10.1103/PhysRevX.9.031053} {\bibfield  {journal} {\bibinfo  {journal} {Phys. Rev. X}\ }\textbf {\bibinfo {volume} {9}},\ \bibinfo {pages} {031053} (\bibinfo {year} {2019})}\BibitemShut {NoStop}%
\bibitem [{\citenamefont {Amaral}(2019)}]{Amaral2019}%
  \BibitemOpen
  \bibfield  {author} {\bibinfo {author} {\bibfnamefont {B.}~\bibnamefont {Amaral}},\ }\href {https://arxiv.org/abs/1904.04182} {\bibinfo {title} {Resource theory of contextuality}} (\bibinfo {year} {2019}),\ \Eprint {https://arxiv.org/abs/1904.04182} {arXiv:1904.04182 [quant-ph]} \BibitemShut {NoStop}%
\bibitem [{\citenamefont {Thomas}\ \emph {et~al.}(2024)\citenamefont {Thomas}, \citenamefont {Emeriau}, \citenamefont {Kashefi}, \citenamefont {Ollivier},\ and\ \citenamefont {Chabaud}}]{thomas2024}%
  \BibitemOpen
  \bibfield  {author} {\bibinfo {author} {\bibfnamefont {H.}~\bibnamefont {Thomas}}, \bibinfo {author} {\bibfnamefont {P.-E.}\ \bibnamefont {Emeriau}}, \bibinfo {author} {\bibfnamefont {E.}~\bibnamefont {Kashefi}}, \bibinfo {author} {\bibfnamefont {H.}~\bibnamefont {Ollivier}},\ and\ \bibinfo {author} {\bibfnamefont {U.}~\bibnamefont {Chabaud}},\ }\href {https://arxiv.org/abs/2410.07024} {\bibinfo {title} {On the role of coherence for quantum computational advantage}} (\bibinfo {year} {2024}),\ \Eprint {https://arxiv.org/abs/2410.07024} {arXiv:2410.07024 [quant-ph]} \BibitemShut {NoStop}%
\bibitem [{\citenamefont {Horodecki}\ \emph {et~al.}(2009)\citenamefont {Horodecki}, \citenamefont {Horodecki}, \citenamefont {Horodecki},\ and\ \citenamefont {Horodecki}}]{Horodecki2009}%
  \BibitemOpen
  \bibfield  {author} {\bibinfo {author} {\bibfnamefont {R.}~\bibnamefont {Horodecki}}, \bibinfo {author} {\bibfnamefont {P.}~\bibnamefont {Horodecki}}, \bibinfo {author} {\bibfnamefont {M.}~\bibnamefont {Horodecki}},\ and\ \bibinfo {author} {\bibfnamefont {K.}~\bibnamefont {Horodecki}},\ }\bibfield  {title} {\bibinfo {title} {Quantum entanglement},\ }\href {https://doi.org/10.1103/RevModPhys.81.865} {\bibfield  {journal} {\bibinfo  {journal} {Rev. Mod. Phys.}\ }\textbf {\bibinfo {volume} {81}},\ \bibinfo {pages} {865} (\bibinfo {year} {2009})}\BibitemShut {NoStop}%
\bibitem [{\citenamefont {Baumgratz}\ \emph {et~al.}(2014)\citenamefont {Baumgratz}, \citenamefont {Cramer},\ and\ \citenamefont {Plenio}}]{Baumgratz2014}%
  \BibitemOpen
  \bibfield  {author} {\bibinfo {author} {\bibfnamefont {T.}~\bibnamefont {Baumgratz}}, \bibinfo {author} {\bibfnamefont {M.}~\bibnamefont {Cramer}},\ and\ \bibinfo {author} {\bibfnamefont {M.~B.}\ \bibnamefont {Plenio}},\ }\bibfield  {title} {\bibinfo {title} {Quantifying coherence},\ }\href {https://doi.org/10.1103/PhysRevLett.113.140401} {\bibfield  {journal} {\bibinfo  {journal} {Phys. Rev. Lett.}\ }\textbf {\bibinfo {volume} {113}},\ \bibinfo {pages} {140401} (\bibinfo {year} {2014})}\BibitemShut {NoStop}%
\bibitem [{\citenamefont {Streltsov}\ \emph {et~al.}(2017)\citenamefont {Streltsov}, \citenamefont {Adesso},\ and\ \citenamefont {Plenio}}]{Streltsov2017}%
  \BibitemOpen
  \bibfield  {author} {\bibinfo {author} {\bibfnamefont {A.}~\bibnamefont {Streltsov}}, \bibinfo {author} {\bibfnamefont {G.}~\bibnamefont {Adesso}},\ and\ \bibinfo {author} {\bibfnamefont {M.~B.}\ \bibnamefont {Plenio}},\ }\bibfield  {title} {\bibinfo {title} {Colloquium: Quantum coherence as a resource},\ }\href {https://doi.org/10.1103/RevModPhys.89.041003} {\bibfield  {journal} {\bibinfo  {journal} {Rev. Mod. Phys.}\ }\textbf {\bibinfo {volume} {89}},\ \bibinfo {pages} {041003} (\bibinfo {year} {2017})}\BibitemShut {NoStop}%
\bibitem [{\citenamefont {Leone}\ \emph {et~al.}(2022)\citenamefont {Leone}, \citenamefont {Oliviero},\ and\ \citenamefont {Hamma}}]{Leone2022}%
  \BibitemOpen
  \bibfield  {author} {\bibinfo {author} {\bibfnamefont {L.}~\bibnamefont {Leone}}, \bibinfo {author} {\bibfnamefont {S.~F.~E.}\ \bibnamefont {Oliviero}},\ and\ \bibinfo {author} {\bibfnamefont {A.}~\bibnamefont {Hamma}},\ }\bibfield  {title} {\bibinfo {title} {Stabilizer r\'enyi entropy},\ }\href {https://doi.org/10.1103/PhysRevLett.128.050402} {\bibfield  {journal} {\bibinfo  {journal} {Phys. Rev. Lett.}\ }\textbf {\bibinfo {volume} {128}},\ \bibinfo {pages} {050402} (\bibinfo {year} {2022})}\BibitemShut {NoStop}%
\bibitem [{\citenamefont {Gottesman}(1998)}]{gottesman1998}%
  \BibitemOpen
  \bibfield  {author} {\bibinfo {author} {\bibfnamefont {D.}~\bibnamefont {Gottesman}},\ }\href {https://arxiv.org/abs/quant-ph/9807006} {\bibinfo {title} {The heisenberg representation of quantum computers}} (\bibinfo {year} {1998}),\ \Eprint {https://arxiv.org/abs/quant-ph/9807006} {arXiv:quant-ph/9807006 [quant-ph]} \BibitemShut {NoStop}%
\bibitem [{\citenamefont {Aaronson}\ and\ \citenamefont {Gottesman}(2004)}]{Aaronson2004}%
  \BibitemOpen
  \bibfield  {author} {\bibinfo {author} {\bibfnamefont {S.}~\bibnamefont {Aaronson}}\ and\ \bibinfo {author} {\bibfnamefont {D.}~\bibnamefont {Gottesman}},\ }\bibfield  {title} {\bibinfo {title} {Improved simulation of stabilizer circuits},\ }\href {https://doi.org/10.1103/PhysRevA.70.052328} {\bibfield  {journal} {\bibinfo  {journal} {Phys. Rev. A}\ }\textbf {\bibinfo {volume} {70}},\ \bibinfo {pages} {052328} (\bibinfo {year} {2004})}\BibitemShut {NoStop}%
\bibitem [{\citenamefont {Vidal}(2003)}]{Vidal2003}%
  \BibitemOpen
  \bibfield  {author} {\bibinfo {author} {\bibfnamefont {G.}~\bibnamefont {Vidal}},\ }\bibfield  {title} {\bibinfo {title} {Efficient classical simulation of slightly entangled quantum computations},\ }\href {https://doi.org/10.1103/PhysRevLett.91.147902} {\bibfield  {journal} {\bibinfo  {journal} {Phys. Rev. Lett.}\ }\textbf {\bibinfo {volume} {91}},\ \bibinfo {pages} {147902} (\bibinfo {year} {2003})}\BibitemShut {NoStop}%
\bibitem [{\citenamefont {Braunstein}\ and\ \citenamefont {van Loock}(2005)}]{Braunstein2005}%
  \BibitemOpen
  \bibfield  {author} {\bibinfo {author} {\bibfnamefont {S.~L.}\ \bibnamefont {Braunstein}}\ and\ \bibinfo {author} {\bibfnamefont {P.}~\bibnamefont {van Loock}},\ }\bibfield  {title} {\bibinfo {title} {Quantum information with continuous variables},\ }\href {https://doi.org/10.1103/RevModPhys.77.513} {\bibfield  {journal} {\bibinfo  {journal} {Rev. Mod. Phys.}\ }\textbf {\bibinfo {volume} {77}},\ \bibinfo {pages} {513} (\bibinfo {year} {2005})}\BibitemShut {NoStop}%
\bibitem [{\citenamefont {Ferraro}\ \emph {et~al.}(2005)\citenamefont {Ferraro}, \citenamefont {Olivares},\ and\ \citenamefont {Paris}}]{Ferraro2005}%
  \BibitemOpen
  \bibfield  {author} {\bibinfo {author} {\bibfnamefont {A.}~\bibnamefont {Ferraro}}, \bibinfo {author} {\bibfnamefont {S.}~\bibnamefont {Olivares}},\ and\ \bibinfo {author} {\bibfnamefont {M.~G.~A.}\ \bibnamefont {Paris}},\ }\href {https://arxiv.org/abs/quant-ph/0503237} {\bibinfo {title} {Gaussian states in continuous variable quantum information}} (\bibinfo {year} {2005}),\ \Eprint {https://arxiv.org/abs/quant-ph/0503237} {arXiv:quant-ph/0503237 [quant-ph]} \BibitemShut {NoStop}%
\bibitem [{\citenamefont {Weedbrook}\ \emph {et~al.}(2012)\citenamefont {Weedbrook}, \citenamefont {Pirandola}, \citenamefont {Garc\'{\i}a-Patr\'on}, \citenamefont {Cerf}, \citenamefont {Ralph}, \citenamefont {Shapiro},\ and\ \citenamefont {Lloyd}}]{Weedbrook2012}%
  \BibitemOpen
  \bibfield  {author} {\bibinfo {author} {\bibfnamefont {C.}~\bibnamefont {Weedbrook}}, \bibinfo {author} {\bibfnamefont {S.}~\bibnamefont {Pirandola}}, \bibinfo {author} {\bibfnamefont {R.}~\bibnamefont {Garc\'{\i}a-Patr\'on}}, \bibinfo {author} {\bibfnamefont {N.~J.}\ \bibnamefont {Cerf}}, \bibinfo {author} {\bibfnamefont {T.~C.}\ \bibnamefont {Ralph}}, \bibinfo {author} {\bibfnamefont {J.~H.}\ \bibnamefont {Shapiro}},\ and\ \bibinfo {author} {\bibfnamefont {S.}~\bibnamefont {Lloyd}},\ }\bibfield  {title} {\bibinfo {title} {Gaussian quantum information},\ }\href {https://doi.org/10.1103/RevModPhys.84.621} {\bibfield  {journal} {\bibinfo  {journal} {Rev. Mod. Phys.}\ }\textbf {\bibinfo {volume} {84}},\ \bibinfo {pages} {621} (\bibinfo {year} {2012})}\BibitemShut {NoStop}%
\bibitem [{\citenamefont {Furusawa}\ \emph {et~al.}(1998)\citenamefont {Furusawa} \emph {et~al.}}]{Furusawa1998}%
  \BibitemOpen
  \bibfield  {author} {\bibinfo {author} {\bibfnamefont {A.}~\bibnamefont {Furusawa}} \emph {et~al.},\ }\bibfield  {title} {\bibinfo {title} {Unconditional quantum teleportation},\ }\href {https://doi.org/10.1126/science.282.5389.706} {\bibfield  {journal} {\bibinfo  {journal} {Science}\ }\textbf {\bibinfo {volume} {282}},\ \bibinfo {pages} {706} (\bibinfo {year} {1998})}\BibitemShut {NoStop}%
\bibitem [{\citenamefont {Yokoyama}\ \emph {et~al.}(2013)\citenamefont {Yokoyama}, \citenamefont {Ukai}, \citenamefont {Armstrong}, \citenamefont {Sornphiphatphong}, \citenamefont {Kaji}, \citenamefont {Suzuki}, \citenamefont {ichi Yoshikawa}, \citenamefont {Yonezawa}, \citenamefont {Menicucci},\ and\ \citenamefont {Furusawa}}]{Yokoyama2013}%
  \BibitemOpen
  \bibfield  {author} {\bibinfo {author} {\bibfnamefont {S.}~\bibnamefont {Yokoyama}}, \bibinfo {author} {\bibfnamefont {R.}~\bibnamefont {Ukai}}, \bibinfo {author} {\bibfnamefont {S.~C.}\ \bibnamefont {Armstrong}}, \bibinfo {author} {\bibfnamefont {C.}~\bibnamefont {Sornphiphatphong}}, \bibinfo {author} {\bibfnamefont {T.}~\bibnamefont {Kaji}}, \bibinfo {author} {\bibfnamefont {S.}~\bibnamefont {Suzuki}}, \bibinfo {author} {\bibfnamefont {J.}~\bibnamefont {ichi Yoshikawa}}, \bibinfo {author} {\bibfnamefont {H.}~\bibnamefont {Yonezawa}}, \bibinfo {author} {\bibfnamefont {N.~C.}\ \bibnamefont {Menicucci}},\ and\ \bibinfo {author} {\bibfnamefont {A.}~\bibnamefont {Furusawa}},\ }\bibfield  {title} {\bibinfo {title} {Ultra-large-scale continuous-variable cluster states multiplexed in the time domain},\ }\href {https://doi.org/10.1038/nphoton.2013.287} {\bibfield  {journal} {\bibinfo  {journal} {Nature Photonics}\ }\textbf {\bibinfo {volume} {7}},\ \bibinfo {pages} {982} (\bibinfo {year} {2013})}\BibitemShut
  {NoStop}%
\bibitem [{\citenamefont {Gottesman}\ \emph {et~al.}(2001)\citenamefont {Gottesman}, \citenamefont {Kitaev},\ and\ \citenamefont {Preskill}}]{Gottesman2001}%
  \BibitemOpen
  \bibfield  {author} {\bibinfo {author} {\bibfnamefont {D.}~\bibnamefont {Gottesman}}, \bibinfo {author} {\bibfnamefont {A.}~\bibnamefont {Kitaev}},\ and\ \bibinfo {author} {\bibfnamefont {J.}~\bibnamefont {Preskill}},\ }\bibfield  {title} {\bibinfo {title} {Encoding a qubit in an oscillator},\ }\href {https://doi.org/10.1103/PhysRevA.64.012310} {\bibfield  {journal} {\bibinfo  {journal} {Phys. Rev. A}\ }\textbf {\bibinfo {volume} {64}},\ \bibinfo {pages} {012310} (\bibinfo {year} {2001})}\BibitemShut {NoStop}%
\bibitem [{\citenamefont {Sivak}\ \emph {et~al.}(2023)\citenamefont {Sivak}, \citenamefont {Eickbusch}, \citenamefont {Royer} \emph {et~al.}}]{sivak2023}%
  \BibitemOpen
  \bibfield  {author} {\bibinfo {author} {\bibfnamefont {V.}~\bibnamefont {Sivak}}, \bibinfo {author} {\bibfnamefont {A.}~\bibnamefont {Eickbusch}}, \bibinfo {author} {\bibfnamefont {B.}~\bibnamefont {Royer}}, \emph {et~al.},\ }\bibfield  {title} {\bibinfo {title} {Real-time quantum error correction beyond break-even},\ }\href {https://doi.org/10.1038/s41586-023-05782-6} {\bibfield  {journal} {\bibinfo  {journal} {Nature}\ }\textbf {\bibinfo {volume} {616}},\ \bibinfo {pages} {50} (\bibinfo {year} {2023})}\BibitemShut {NoStop}%
\bibitem [{\citenamefont {Garc\'{\i}a-\'Alvarez}\ \emph {et~al.}(2020)\citenamefont {Garc\'{\i}a-\'Alvarez}, \citenamefont {Calcluth}, \citenamefont {Ferraro},\ and\ \citenamefont {Ferrini}}]{Alvarez2020}%
  \BibitemOpen
  \bibfield  {author} {\bibinfo {author} {\bibfnamefont {L.}~\bibnamefont {Garc\'{\i}a-\'Alvarez}}, \bibinfo {author} {\bibfnamefont {C.}~\bibnamefont {Calcluth}}, \bibinfo {author} {\bibfnamefont {A.}~\bibnamefont {Ferraro}},\ and\ \bibinfo {author} {\bibfnamefont {G.}~\bibnamefont {Ferrini}},\ }\bibfield  {title} {\bibinfo {title} {Efficient simulatability of continuous-variable circuits with large wigner negativity},\ }\href {https://doi.org/10.1103/PhysRevResearch.2.043322} {\bibfield  {journal} {\bibinfo  {journal} {Phys. Rev. Res.}\ }\textbf {\bibinfo {volume} {2}},\ \bibinfo {pages} {043322} (\bibinfo {year} {2020})}\BibitemShut {NoStop}%
\bibitem [{\citenamefont {Calcluth}\ \emph {et~al.}(2023)\citenamefont {Calcluth}, \citenamefont {Ferraro},\ and\ \citenamefont {Ferrini}}]{Calcuth2023}%
  \BibitemOpen
  \bibfield  {author} {\bibinfo {author} {\bibfnamefont {C.}~\bibnamefont {Calcluth}}, \bibinfo {author} {\bibfnamefont {A.}~\bibnamefont {Ferraro}},\ and\ \bibinfo {author} {\bibfnamefont {G.}~\bibnamefont {Ferrini}},\ }\bibfield  {title} {\bibinfo {title} {Vacuum provides quantum advantage to otherwise simulatable architectures},\ }\href {https://doi.org/10.1103/PhysRevA.107.062414} {\bibfield  {journal} {\bibinfo  {journal} {Phys. Rev. A}\ }\textbf {\bibinfo {volume} {107}},\ \bibinfo {pages} {062414} (\bibinfo {year} {2023})}\BibitemShut {NoStop}%
\bibitem [{\citenamefont {Bartlett}\ \emph {et~al.}(2002)\citenamefont {Bartlett}, \citenamefont {Sanders}, \citenamefont {Braunstein},\ and\ \citenamefont {Nemoto}}]{Bartlett2002}%
  \BibitemOpen
  \bibfield  {author} {\bibinfo {author} {\bibfnamefont {S.~D.}\ \bibnamefont {Bartlett}}, \bibinfo {author} {\bibfnamefont {B.~C.}\ \bibnamefont {Sanders}}, \bibinfo {author} {\bibfnamefont {S.~L.}\ \bibnamefont {Braunstein}},\ and\ \bibinfo {author} {\bibfnamefont {K.}~\bibnamefont {Nemoto}},\ }\bibfield  {title} {\bibinfo {title} {Efficient classical simulation of continuous variable quantum information processes},\ }\href {https://doi.org/10.1103/PhysRevLett.88.097904} {\bibfield  {journal} {\bibinfo  {journal} {Physical Review Letters}\ }\textbf {\bibinfo {volume} {88}},\ \bibinfo {pages} {097904} (\bibinfo {year} {2002})}\BibitemShut {NoStop}%
\bibitem [{\citenamefont {Adesso}\ and\ \citenamefont {Illuminati}(2007)}]{adesso2007entanglement}%
  \BibitemOpen
  \bibfield  {author} {\bibinfo {author} {\bibfnamefont {G.}~\bibnamefont {Adesso}}\ and\ \bibinfo {author} {\bibfnamefont {F.}~\bibnamefont {Illuminati}},\ }\bibfield  {title} {\bibinfo {title} {Entanglement in continuous-variable systems: recent advances and current perspectives},\ }\href {https://doi.org/10.1088/1751-8113/40/28/S01} {\bibfield  {journal} {\bibinfo  {journal} {Journal of Physics A: Mathematical and Theoretical}\ }\textbf {\bibinfo {volume} {40}},\ \bibinfo {pages} {7821} (\bibinfo {year} {2007})}\BibitemShut {NoStop}%
\bibitem [{\citenamefont {Kenfack}\ and\ \citenamefont {{\.Z}yczkowski}(2004)}]{kenfack2004negativity}%
  \BibitemOpen
  \bibfield  {author} {\bibinfo {author} {\bibfnamefont {A.}~\bibnamefont {Kenfack}}\ and\ \bibinfo {author} {\bibfnamefont {K.}~\bibnamefont {{\.Z}yczkowski}},\ }\bibfield  {title} {\bibinfo {title} {Negativity of the {W}igner function as an indicator of non-classicality},\ }\href {https://doi.org/10.1088/1464-4266/6/10/003} {\bibfield  {journal} {\bibinfo  {journal} {Journal of Optics B: Quantum and Semiclassical Optics}\ }\textbf {\bibinfo {volume} {6}},\ \bibinfo {pages} {396} (\bibinfo {year} {2004})}\BibitemShut {NoStop}%
\bibitem [{\citenamefont {Gehrke}\ \emph {et~al.}(2012)\citenamefont {Gehrke}, \citenamefont {Sperling},\ and\ \citenamefont {Vogel}}]{Gehrke2012}%
  \BibitemOpen
  \bibfield  {author} {\bibinfo {author} {\bibfnamefont {C.}~\bibnamefont {Gehrke}}, \bibinfo {author} {\bibfnamefont {J.}~\bibnamefont {Sperling}},\ and\ \bibinfo {author} {\bibfnamefont {W.}~\bibnamefont {Vogel}},\ }\bibfield  {title} {\bibinfo {title} {Quantification of nonclassicality},\ }\href {https://doi.org/10.1103/PhysRevA.86.052118} {\bibfield  {journal} {\bibinfo  {journal} {Phys. Rev. A}\ }\textbf {\bibinfo {volume} {86}},\ \bibinfo {pages} {052118} (\bibinfo {year} {2012})}\BibitemShut {NoStop}%
\bibitem [{\citenamefont {Sperling}\ and\ \citenamefont {Vogel}(2015)}]{sperling2015convex}%
  \BibitemOpen
  \bibfield  {author} {\bibinfo {author} {\bibfnamefont {J.}~\bibnamefont {Sperling}}\ and\ \bibinfo {author} {\bibfnamefont {W.}~\bibnamefont {Vogel}},\ }\bibfield  {title} {\bibinfo {title} {Convex ordering and quantification of quantumness},\ }\href {https://doi.org/10.1088/0031-8949/90/7/074024} {\bibfield  {journal} {\bibinfo  {journal} {Physica Scripta}\ }\textbf {\bibinfo {volume} {90}},\ \bibinfo {pages} {074024} (\bibinfo {year} {2015})}\BibitemShut {NoStop}%
\bibitem [{\citenamefont {Chabaud}\ \emph {et~al.}(2020)\citenamefont {Chabaud}, \citenamefont {Markham},\ and\ \citenamefont {Grosshans}}]{chabaud2020stellar}%
  \BibitemOpen
  \bibfield  {author} {\bibinfo {author} {\bibfnamefont {U.}~\bibnamefont {Chabaud}}, \bibinfo {author} {\bibfnamefont {D.}~\bibnamefont {Markham}},\ and\ \bibinfo {author} {\bibfnamefont {F.}~\bibnamefont {Grosshans}},\ }\bibfield  {title} {\bibinfo {title} {Stellar representation of non-{G}aussian quantum states},\ }\href {https://doi.org/10.1103/PhysRevLett.124.063605} {\bibfield  {journal} {\bibinfo  {journal} {Physical Review Letters}\ }\textbf {\bibinfo {volume} {124}},\ \bibinfo {pages} {063605} (\bibinfo {year} {2020})}\BibitemShut {NoStop}%
\bibitem [{\citenamefont {Marshall}\ and\ \citenamefont {Anand}(2023)}]{Marshall2023}%
  \BibitemOpen
  \bibfield  {author} {\bibinfo {author} {\bibfnamefont {J.}~\bibnamefont {Marshall}}\ and\ \bibinfo {author} {\bibfnamefont {N.}~\bibnamefont {Anand}},\ }\bibfield  {title} {\bibinfo {title} {Simulation of quantum optics by coherent state decomposition},\ }\href {https://doi.org/10.1364/opticaq.504311} {\bibfield  {journal} {\bibinfo  {journal} {Optica Quantum}\ }\textbf {\bibinfo {volume} {1}},\ \bibinfo {pages} {78} (\bibinfo {year} {2023})}\BibitemShut {NoStop}%
\bibitem [{\citenamefont {Mari}\ and\ \citenamefont {Eisert}(2012)}]{Mari2012}%
  \BibitemOpen
  \bibfield  {author} {\bibinfo {author} {\bibfnamefont {A.}~\bibnamefont {Mari}}\ and\ \bibinfo {author} {\bibfnamefont {J.}~\bibnamefont {Eisert}},\ }\bibfield  {title} {\bibinfo {title} {Positive {W}igner functions render classical simulation of quantum computation efficient},\ }\href {https://doi.org/10.1103/PhysRevLett.109.230503} {\bibfield  {journal} {\bibinfo  {journal} {Physical Review Letters}\ }\textbf {\bibinfo {volume} {109}},\ \bibinfo {pages} {230503} (\bibinfo {year} {2012})}\BibitemShut {NoStop}%
\bibitem [{\citenamefont {Pashayan}\ \emph {et~al.}(2015)\citenamefont {Pashayan}, \citenamefont {Wallman},\ and\ \citenamefont {Bartlett}}]{pashayan2015estimating}%
  \BibitemOpen
  \bibfield  {author} {\bibinfo {author} {\bibfnamefont {H.}~\bibnamefont {Pashayan}}, \bibinfo {author} {\bibfnamefont {J.~J.}\ \bibnamefont {Wallman}},\ and\ \bibinfo {author} {\bibfnamefont {S.~D.}\ \bibnamefont {Bartlett}},\ }\bibfield  {title} {\bibinfo {title} {Estimating outcome probabilities of quantum circuits using quasiprobabilities},\ }\href {https://doi.org/10.1103/PhysRevLett.115.070501} {\bibfield  {journal} {\bibinfo  {journal} {Physical Review Letters}\ }\textbf {\bibinfo {volume} {115}},\ \bibinfo {pages} {070501} (\bibinfo {year} {2015})}\BibitemShut {NoStop}%
\bibitem [{\citenamefont {Chabaud}\ \emph {et~al.}(2021)\citenamefont {Chabaud}, \citenamefont {Ferrini}, \citenamefont {Grosshans},\ and\ \citenamefont {Markham}}]{chabaud2021simulation}%
  \BibitemOpen
  \bibfield  {author} {\bibinfo {author} {\bibfnamefont {U.}~\bibnamefont {Chabaud}}, \bibinfo {author} {\bibfnamefont {G.}~\bibnamefont {Ferrini}}, \bibinfo {author} {\bibfnamefont {F.}~\bibnamefont {Grosshans}},\ and\ \bibinfo {author} {\bibfnamefont {D.}~\bibnamefont {Markham}},\ }\bibfield  {title} {\bibinfo {title} {Classical simulation of gaussian quantum circuits with non-gaussian input states},\ }\href {https://doi.org/10.1103/PhysRevResearch.3.033018} {\bibfield  {journal} {\bibinfo  {journal} {Phys. Rev. Res.}\ }\textbf {\bibinfo {volume} {3}},\ \bibinfo {pages} {033018} (\bibinfo {year} {2021})}\BibitemShut {NoStop}%
\bibitem [{\citenamefont {Bourassa}\ \emph {et~al.}(2021)\citenamefont {Bourassa}, \citenamefont {Quesada}, \citenamefont {Tzitrin}, \citenamefont {Sz\'ava}, \citenamefont {Isacsson}, \citenamefont {Izaac}, \citenamefont {Sabapathy}, \citenamefont {Dauphinais},\ and\ \citenamefont {Dhand}}]{Bourassa2021}%
  \BibitemOpen
  \bibfield  {author} {\bibinfo {author} {\bibfnamefont {J.~E.}\ \bibnamefont {Bourassa}}, \bibinfo {author} {\bibfnamefont {N.}~\bibnamefont {Quesada}}, \bibinfo {author} {\bibfnamefont {I.}~\bibnamefont {Tzitrin}}, \bibinfo {author} {\bibfnamefont {A.}~\bibnamefont {Sz\'ava}}, \bibinfo {author} {\bibfnamefont {T.}~\bibnamefont {Isacsson}}, \bibinfo {author} {\bibfnamefont {J.}~\bibnamefont {Izaac}}, \bibinfo {author} {\bibfnamefont {K.~K.}\ \bibnamefont {Sabapathy}}, \bibinfo {author} {\bibfnamefont {G.}~\bibnamefont {Dauphinais}},\ and\ \bibinfo {author} {\bibfnamefont {I.}~\bibnamefont {Dhand}},\ }\bibfield  {title} {\bibinfo {title} {Fast simulation of bosonic qubits via gaussian functions in phase space},\ }\href {https://doi.org/10.1103/PRXQuantum.2.040315} {\bibfield  {journal} {\bibinfo  {journal} {PRX Quantum}\ }\textbf {\bibinfo {volume} {2}},\ \bibinfo {pages} {040315} (\bibinfo {year} {2021})}\BibitemShut {NoStop}%
\bibitem [{\citenamefont {Chabaud}\ and\ \citenamefont {Walschaers}(2023)}]{chabaud2023resources}%
  \BibitemOpen
  \bibfield  {author} {\bibinfo {author} {\bibfnamefont {U.}~\bibnamefont {Chabaud}}\ and\ \bibinfo {author} {\bibfnamefont {M.}~\bibnamefont {Walschaers}},\ }\bibfield  {title} {\bibinfo {title} {Resources for bosonic quantum computational advantage},\ }\href {https://doi.org/10.1103/PhysRevLett.130.090602} {\bibfield  {journal} {\bibinfo  {journal} {Physical Review Letters}\ }\textbf {\bibinfo {volume} {130}},\ \bibinfo {pages} {090602} (\bibinfo {year} {2023})}\BibitemShut {NoStop}%
\bibitem [{\citenamefont {Frigerio}\ \emph {et~al.}(2024)\citenamefont {Frigerio}, \citenamefont {Debray}, \citenamefont {Treps},\ and\ \citenamefont {Walschaers}}]{frigerio2024}%
  \BibitemOpen
  \bibfield  {author} {\bibinfo {author} {\bibfnamefont {M.}~\bibnamefont {Frigerio}}, \bibinfo {author} {\bibfnamefont {A.}~\bibnamefont {Debray}}, \bibinfo {author} {\bibfnamefont {N.}~\bibnamefont {Treps}},\ and\ \bibinfo {author} {\bibfnamefont {M.}~\bibnamefont {Walschaers}},\ }\href {https://arxiv.org/abs/2410.09226} {\bibinfo {title} {Simulability of non-classical continuous-variable quantum circuits}} (\bibinfo {year} {2024}),\ \Eprint {https://arxiv.org/abs/2410.09226} {arXiv:2410.09226 [quant-ph]} \BibitemShut {NoStop}%
\bibitem [{\citenamefont {Aharonov}\ \emph {et~al.}(2023)\citenamefont {Aharonov}, \citenamefont {Gao}, \citenamefont {Landau}, \citenamefont {Liu},\ and\ \citenamefont {Vazirani}}]{Aharonov2023}%
  \BibitemOpen
  \bibfield  {author} {\bibinfo {author} {\bibfnamefont {D.}~\bibnamefont {Aharonov}}, \bibinfo {author} {\bibfnamefont {X.}~\bibnamefont {Gao}}, \bibinfo {author} {\bibfnamefont {Z.}~\bibnamefont {Landau}}, \bibinfo {author} {\bibfnamefont {Y.}~\bibnamefont {Liu}},\ and\ \bibinfo {author} {\bibfnamefont {U.}~\bibnamefont {Vazirani}},\ }\bibfield  {title} {\bibinfo {title} {A polynomial-time classical algorithm for noisy random circuit sampling},\ }in\ \href {https://doi.org/10.1145/3564246.3585234} {\emph {\bibinfo {booktitle} {Proceedings of the 55th Annual ACM Symposium on Theory of Computing}}},\ \bibinfo {series and number} {STOC ’23}\ (\bibinfo  {publisher} {ACM},\ \bibinfo {year} {2023})\BibitemShut {NoStop}%
\bibitem [{\citenamefont {Fontana}\ \emph {et~al.}(2023)\citenamefont {Fontana}, \citenamefont {Rudolph}, \citenamefont {Duncan}, \citenamefont {Rungger},\ and\ \citenamefont {Cîrstoiu}}]{fontana2023}%
  \BibitemOpen
  \bibfield  {author} {\bibinfo {author} {\bibfnamefont {E.}~\bibnamefont {Fontana}}, \bibinfo {author} {\bibfnamefont {M.~S.}\ \bibnamefont {Rudolph}}, \bibinfo {author} {\bibfnamefont {R.}~\bibnamefont {Duncan}}, \bibinfo {author} {\bibfnamefont {I.}~\bibnamefont {Rungger}},\ and\ \bibinfo {author} {\bibfnamefont {C.}~\bibnamefont {Cîrstoiu}},\ }\href {https://arxiv.org/abs/2306.05400} {\bibinfo {title} {Classical simulations of noisy variational quantum circuits}} (\bibinfo {year} {2023}),\ \Eprint {https://arxiv.org/abs/2306.05400} {arXiv:2306.05400 [quant-ph]} \BibitemShut {NoStop}%
\bibitem [{\citenamefont {Angrisani}\ \emph {et~al.}(2024)\citenamefont {Angrisani}, \citenamefont {Schmidhuber}, \citenamefont {Rudolph}, \citenamefont {Cerezo}, \citenamefont {Holmes},\ and\ \citenamefont {Huang}}]{angrisani2024}%
  \BibitemOpen
  \bibfield  {author} {\bibinfo {author} {\bibfnamefont {A.}~\bibnamefont {Angrisani}}, \bibinfo {author} {\bibfnamefont {A.}~\bibnamefont {Schmidhuber}}, \bibinfo {author} {\bibfnamefont {M.~S.}\ \bibnamefont {Rudolph}}, \bibinfo {author} {\bibfnamefont {M.}~\bibnamefont {Cerezo}}, \bibinfo {author} {\bibfnamefont {Z.}~\bibnamefont {Holmes}},\ and\ \bibinfo {author} {\bibfnamefont {H.-Y.}\ \bibnamefont {Huang}},\ }\href {https://arxiv.org/abs/2409.01706} {\bibinfo {title} {Classically estimating observables of noiseless quantum circuits}} (\bibinfo {year} {2024}),\ \Eprint {https://arxiv.org/abs/2409.01706} {arXiv:2409.01706 [quant-ph]} \BibitemShut {NoStop}%
\bibitem [{\citenamefont {Schuster}\ \emph {et~al.}(2024)\citenamefont {Schuster}, \citenamefont {Yin}, \citenamefont {Gao},\ and\ \citenamefont {Yao}}]{schuster2024}%
  \BibitemOpen
  \bibfield  {author} {\bibinfo {author} {\bibfnamefont {T.}~\bibnamefont {Schuster}}, \bibinfo {author} {\bibfnamefont {C.}~\bibnamefont {Yin}}, \bibinfo {author} {\bibfnamefont {X.}~\bibnamefont {Gao}},\ and\ \bibinfo {author} {\bibfnamefont {N.~Y.}\ \bibnamefont {Yao}},\ }\href {https://arxiv.org/abs/2407.12768} {\bibinfo {title} {A polynomial-time classical algorithm for noisy quantum circuits}} (\bibinfo {year} {2024}),\ \Eprint {https://arxiv.org/abs/2407.12768} {arXiv:2407.12768 [quant-ph]} \BibitemShut {NoStop}%
\bibitem [{\citenamefont {Angrisani}\ \emph {et~al.}(2025)\citenamefont {Angrisani}, \citenamefont {Mele}, \citenamefont {Rudolph}, \citenamefont {Cerezo},\ and\ \citenamefont {Holmes}}]{angrisani2025simulating}%
  \BibitemOpen
  \bibfield  {author} {\bibinfo {author} {\bibfnamefont {A.}~\bibnamefont {Angrisani}}, \bibinfo {author} {\bibfnamefont {A.~A.}\ \bibnamefont {Mele}}, \bibinfo {author} {\bibfnamefont {M.~S.}\ \bibnamefont {Rudolph}}, \bibinfo {author} {\bibfnamefont {M.}~\bibnamefont {Cerezo}},\ and\ \bibinfo {author} {\bibfnamefont {Z.}~\bibnamefont {Holmes}},\ }\href {https://arxiv.org/abs/2501.13101} {\bibinfo {title} {Simulating quantum circuits with arbitrary local noise using pauli propagation}} (\bibinfo {year} {2025}),\ \Eprint {https://arxiv.org/abs/2501.13101} {arXiv:2501.13101 [quant-ph]} \BibitemShut {NoStop}%
\bibitem [{\citenamefont {Martinez}\ \emph {et~al.}(2025)\citenamefont {Martinez}, \citenamefont {Angrisani}, \citenamefont {Pankovets}, \citenamefont {Fawzi},\ and\ \citenamefont {França}}]{martinez2025efficient}%
  \BibitemOpen
  \bibfield  {author} {\bibinfo {author} {\bibfnamefont {V.}~\bibnamefont {Martinez}}, \bibinfo {author} {\bibfnamefont {A.}~\bibnamefont {Angrisani}}, \bibinfo {author} {\bibfnamefont {E.}~\bibnamefont {Pankovets}}, \bibinfo {author} {\bibfnamefont {O.}~\bibnamefont {Fawzi}},\ and\ \bibinfo {author} {\bibfnamefont {D.~S.}\ \bibnamefont {França}},\ }\href {https://arxiv.org/abs/2501.13050} {\bibinfo {title} {Efficient simulation of parametrized quantum circuits under non-unital noise through pauli backpropagation}} (\bibinfo {year} {2025}),\ \Eprint {https://arxiv.org/abs/2501.13050} {arXiv:2501.13050 [quant-ph]} \BibitemShut {NoStop}%
\bibitem [{\citenamefont {Yamasaki}\ \emph {et~al.}(2020)\citenamefont {Yamasaki}, \citenamefont {Matsuura},\ and\ \citenamefont {Koashi}}]{Yamasaki2020}%
  \BibitemOpen
  \bibfield  {author} {\bibinfo {author} {\bibfnamefont {H.}~\bibnamefont {Yamasaki}}, \bibinfo {author} {\bibfnamefont {T.}~\bibnamefont {Matsuura}},\ and\ \bibinfo {author} {\bibfnamefont {M.}~\bibnamefont {Koashi}},\ }\bibfield  {title} {\bibinfo {title} {Cost-reduced all-gaussian universality with the gottesman-kitaev-preskill code: Resource-theoretic approach to cost analysis},\ }\href {https://doi.org/10.1103/PhysRevResearch.2.023270} {\bibfield  {journal} {\bibinfo  {journal} {Phys. Rev. Res.}\ }\textbf {\bibinfo {volume} {2}},\ \bibinfo {pages} {023270} (\bibinfo {year} {2020})}\BibitemShut {NoStop}%
\bibitem [{\citenamefont {Hahn}\ \emph {et~al.}(2022)\citenamefont {Hahn}, \citenamefont {Ferraro}, \citenamefont {Hultquist}, \citenamefont {Ferrini},\ and\ \citenamefont {Garc\'{\i}a-\'Alvarez}}]{Hahn2022}%
  \BibitemOpen
  \bibfield  {author} {\bibinfo {author} {\bibfnamefont {O.}~\bibnamefont {Hahn}}, \bibinfo {author} {\bibfnamefont {A.}~\bibnamefont {Ferraro}}, \bibinfo {author} {\bibfnamefont {L.}~\bibnamefont {Hultquist}}, \bibinfo {author} {\bibfnamefont {G.}~\bibnamefont {Ferrini}},\ and\ \bibinfo {author} {\bibfnamefont {L.}~\bibnamefont {Garc\'{\i}a-\'Alvarez}},\ }\bibfield  {title} {\bibinfo {title} {Quantifying qubit magic resource with gottesman-kitaev-preskill encoding},\ }\href {https://doi.org/10.1103/PhysRevLett.128.210502} {\bibfield  {journal} {\bibinfo  {journal} {Phys. Rev. Lett.}\ }\textbf {\bibinfo {volume} {128}},\ \bibinfo {pages} {210502} (\bibinfo {year} {2022})}\BibitemShut {NoStop}%
\bibitem [{\citenamefont {Hahn}\ \emph {et~al.}(2024)\citenamefont {Hahn}, \citenamefont {Ferrini},\ and\ \citenamefont {Takagi}}]{hahn2024}%
  \BibitemOpen
  \bibfield  {author} {\bibinfo {author} {\bibfnamefont {O.}~\bibnamefont {Hahn}}, \bibinfo {author} {\bibfnamefont {G.}~\bibnamefont {Ferrini}},\ and\ \bibinfo {author} {\bibfnamefont {R.}~\bibnamefont {Takagi}},\ }\href {https://arxiv.org/abs/2406.06418} {\bibinfo {title} {Bridging magic and non-gaussian resources via gottesman-kitaev-preskill encoding}} (\bibinfo {year} {2024}),\ \Eprint {https://arxiv.org/abs/2406.06418} {arXiv:2406.06418 [quant-ph]} \BibitemShut {NoStop}%
\bibitem [{\citenamefont {Nielsen}\ and\ \citenamefont {Chuang}(2011)}]{NielsenChuang}%
  \BibitemOpen
  \bibfield  {author} {\bibinfo {author} {\bibfnamefont {M.~A.}\ \bibnamefont {Nielsen}}\ and\ \bibinfo {author} {\bibfnamefont {I.~L.}\ \bibnamefont {Chuang}},\ }\href {https://doi.org/10.1017/CBO9780511976667} {\emph {\bibinfo {title} {Quantum Computation and Quantum Information: 10th Anniversary Edition}}}\ (\bibinfo  {publisher} {Cambridge University Press},\ \bibinfo {address} {New York, NY, USA},\ \bibinfo {year} {2011})\BibitemShut {NoStop}%
\bibitem [{\citenamefont {Budinger}\ \emph {et~al.}(2024)\citenamefont {Budinger}, \citenamefont {Furusawa},\ and\ \citenamefont {van Loock}}]{Budinger2024}%
  \BibitemOpen
  \bibfield  {author} {\bibinfo {author} {\bibfnamefont {N.}~\bibnamefont {Budinger}}, \bibinfo {author} {\bibfnamefont {A.}~\bibnamefont {Furusawa}},\ and\ \bibinfo {author} {\bibfnamefont {P.}~\bibnamefont {van Loock}},\ }\bibfield  {title} {\bibinfo {title} {All-optical quantum computing using cubic phase gates},\ }\href {https://doi.org/10.1103/PhysRevResearch.6.023332} {\bibfield  {journal} {\bibinfo  {journal} {Phys. Rev. Res.}\ }\textbf {\bibinfo {volume} {6}},\ \bibinfo {pages} {023332} (\bibinfo {year} {2024})}\BibitemShut {NoStop}%
\bibitem [{\citenamefont {Lloyd}\ and\ \citenamefont {Braunstein}(1999)}]{Lloyd1999}%
  \BibitemOpen
  \bibfield  {author} {\bibinfo {author} {\bibfnamefont {S.}~\bibnamefont {Lloyd}}\ and\ \bibinfo {author} {\bibfnamefont {S.~L.}\ \bibnamefont {Braunstein}},\ }\bibfield  {title} {\bibinfo {title} {Quantum computation over continuous variables},\ }\href {https://doi.org/10.1103/PhysRevLett.82.1784} {\bibfield  {journal} {\bibinfo  {journal} {Phys. Rev. Lett.}\ }\textbf {\bibinfo {volume} {82}},\ \bibinfo {pages} {1784} (\bibinfo {year} {1999})}\BibitemShut {NoStop}%
\bibitem [{\citenamefont {Arzani}\ \emph {et~al.}(2025)\citenamefont {Arzani}, \citenamefont {Booth},\ and\ \citenamefont {Chabaud}}]{arzani2025}%
  \BibitemOpen
  \bibfield  {author} {\bibinfo {author} {\bibfnamefont {F.}~\bibnamefont {Arzani}}, \bibinfo {author} {\bibfnamefont {R.~I.}\ \bibnamefont {Booth}},\ and\ \bibinfo {author} {\bibfnamefont {U.}~\bibnamefont {Chabaud}},\ }\href {https://arxiv.org/abs/2501.13857} {\bibinfo {title} {Can effective descriptions of bosonic systems be considered complete?}} (\bibinfo {year} {2025}),\ \Eprint {https://arxiv.org/abs/2501.13857} {arXiv:2501.13857 [quant-ph]} \BibitemShut {NoStop}%
\bibitem [{\citenamefont {Sefi}\ and\ \citenamefont {van Loock}(2011)}]{Sefi2011}%
  \BibitemOpen
  \bibfield  {author} {\bibinfo {author} {\bibfnamefont {S.}~\bibnamefont {Sefi}}\ and\ \bibinfo {author} {\bibfnamefont {P.}~\bibnamefont {van Loock}},\ }\bibfield  {title} {\bibinfo {title} {How to decompose arbitrary continuous-variable quantum operations},\ }\href {https://doi.org/10.1103/PhysRevLett.107.170501} {\bibfield  {journal} {\bibinfo  {journal} {Phys. Rev. Lett.}\ }\textbf {\bibinfo {volume} {107}},\ \bibinfo {pages} {170501} (\bibinfo {year} {2011})}\BibitemShut {NoStop}%
\bibitem [{\citenamefont {Menicucci}\ \emph {et~al.}(2006)\citenamefont {Menicucci}, \citenamefont {van Loock}, \citenamefont {Gu}, \citenamefont {Weedbrook}, \citenamefont {Ralph},\ and\ \citenamefont {Nielsen}}]{Menicucci2006}%
  \BibitemOpen
  \bibfield  {author} {\bibinfo {author} {\bibfnamefont {N.~C.}\ \bibnamefont {Menicucci}}, \bibinfo {author} {\bibfnamefont {P.}~\bibnamefont {van Loock}}, \bibinfo {author} {\bibfnamefont {M.}~\bibnamefont {Gu}}, \bibinfo {author} {\bibfnamefont {C.}~\bibnamefont {Weedbrook}}, \bibinfo {author} {\bibfnamefont {T.~C.}\ \bibnamefont {Ralph}},\ and\ \bibinfo {author} {\bibfnamefont {M.~A.}\ \bibnamefont {Nielsen}},\ }\bibfield  {title} {\bibinfo {title} {Universal quantum computation with continuous-variable cluster states},\ }\href {https://doi.org/10.1103/PhysRevLett.97.110501} {\bibfield  {journal} {\bibinfo  {journal} {Phys. Rev. Lett.}\ }\textbf {\bibinfo {volume} {97}},\ \bibinfo {pages} {110501} (\bibinfo {year} {2006})}\BibitemShut {NoStop}%
\bibitem [{\citenamefont {Albert}\ \emph {et~al.}(2018)\citenamefont {Albert}, \citenamefont {Noh}, \citenamefont {Duivenvoorden}, \citenamefont {Young}, \citenamefont {Brierley}, \citenamefont {Reinhold}, \citenamefont {Vuillot}, \citenamefont {Li}, \citenamefont {Shen}, \citenamefont {Girvin}, \citenamefont {Terhal},\ and\ \citenamefont {Jiang}}]{Albert2018}%
  \BibitemOpen
  \bibfield  {author} {\bibinfo {author} {\bibfnamefont {V.~V.}\ \bibnamefont {Albert}}, \bibinfo {author} {\bibfnamefont {K.}~\bibnamefont {Noh}}, \bibinfo {author} {\bibfnamefont {K.}~\bibnamefont {Duivenvoorden}}, \bibinfo {author} {\bibfnamefont {D.~J.}\ \bibnamefont {Young}}, \bibinfo {author} {\bibfnamefont {R.~T.}\ \bibnamefont {Brierley}}, \bibinfo {author} {\bibfnamefont {P.}~\bibnamefont {Reinhold}}, \bibinfo {author} {\bibfnamefont {C.}~\bibnamefont {Vuillot}}, \bibinfo {author} {\bibfnamefont {L.}~\bibnamefont {Li}}, \bibinfo {author} {\bibfnamefont {C.}~\bibnamefont {Shen}}, \bibinfo {author} {\bibfnamefont {S.~M.}\ \bibnamefont {Girvin}}, \bibinfo {author} {\bibfnamefont {B.~M.}\ \bibnamefont {Terhal}},\ and\ \bibinfo {author} {\bibfnamefont {L.}~\bibnamefont {Jiang}},\ }\bibfield  {title} {\bibinfo {title} {Performance and structure of single-mode bosonic codes},\ }\href {https://doi.org/10.1103/PhysRevA.97.032346} {\bibfield  {journal} {\bibinfo  {journal} {Phys. Rev. A}\ }\textbf {\bibinfo
  {volume} {97}},\ \bibinfo {pages} {032346} (\bibinfo {year} {2018})}\BibitemShut {NoStop}%
\bibitem [{\citenamefont {Adesso}(2006)}]{adesso2006generic}%
  \BibitemOpen
  \bibfield  {author} {\bibinfo {author} {\bibfnamefont {G.}~\bibnamefont {Adesso}},\ }\bibfield  {title} {\bibinfo {title} {Generic entanglement and standard form for n-mode pure gaussian states},\ }\href@noop {} {\bibfield  {journal} {\bibinfo  {journal} {Physical review letters}\ }\textbf {\bibinfo {volume} {97}},\ \bibinfo {pages} {130502} (\bibinfo {year} {2006})}\BibitemShut {NoStop}%
\bibitem [{\citenamefont {Serafini}\ and\ \citenamefont {Adesso}(2007)}]{Serafini_2007}%
  \BibitemOpen
  \bibfield  {author} {\bibinfo {author} {\bibfnamefont {A.}~\bibnamefont {Serafini}}\ and\ \bibinfo {author} {\bibfnamefont {G.}~\bibnamefont {Adesso}},\ }\bibfield  {title} {\bibinfo {title} {Standard forms and entanglement engineering of multimode gaussian states under local operations},\ }\href {https://doi.org/10.1088/1751-8113/40/28/S13} {\bibfield  {journal} {\bibinfo  {journal} {Journal of Physics A: Mathematical and Theoretical}\ }\textbf {\bibinfo {volume} {40}},\ \bibinfo {pages} {8041} (\bibinfo {year} {2007})}\BibitemShut {NoStop}%
\bibitem [{\citenamefont {Chabaud}\ \emph {et~al.}(2024)\citenamefont {Chabaud}, \citenamefont {Joseph}, \citenamefont {Mehraban},\ and\ \citenamefont {Motamedi}}]{chabaud2024complexity}%
  \BibitemOpen
  \bibfield  {author} {\bibinfo {author} {\bibfnamefont {U.}~\bibnamefont {Chabaud}}, \bibinfo {author} {\bibfnamefont {M.}~\bibnamefont {Joseph}}, \bibinfo {author} {\bibfnamefont {S.}~\bibnamefont {Mehraban}},\ and\ \bibinfo {author} {\bibfnamefont {A.}~\bibnamefont {Motamedi}},\ }\href {https://arxiv.org/abs/2410.04274} {\bibinfo {title} {Bosonic quantum computational complexity}} (\bibinfo {year} {2024}),\ \Eprint {https://arxiv.org/abs/2410.04274} {arXiv:2410.04274 [quant-ph]} \BibitemShut {NoStop}%
\bibitem [{\citenamefont {Pashayan}\ \emph {et~al.}(2020)\citenamefont {Pashayan}, \citenamefont {Bartlett},\ and\ \citenamefont {Gross}}]{Pashayan2020}%
  \BibitemOpen
  \bibfield  {author} {\bibinfo {author} {\bibfnamefont {H.}~\bibnamefont {Pashayan}}, \bibinfo {author} {\bibfnamefont {S.~D.}\ \bibnamefont {Bartlett}},\ and\ \bibinfo {author} {\bibfnamefont {D.}~\bibnamefont {Gross}},\ }\bibfield  {title} {\bibinfo {title} {From estimation of quantum probabilities to simulation of quantum circuits},\ }\href {https://doi.org/10.22331/q-2020-01-13-223} {\bibfield  {journal} {\bibinfo  {journal} {{Quantum}}\ }\textbf {\bibinfo {volume} {4}},\ \bibinfo {pages} {223} (\bibinfo {year} {2020})}\BibitemShut {NoStop}%
\bibitem [{\citenamefont {Nagai}\ \emph {et~al.}(2021)\citenamefont {Nagai}, \citenamefont {Tomono},\ and\ \citenamefont {Minato}}]{Nagai2021}%
  \BibitemOpen
  \bibfield  {author} {\bibinfo {author} {\bibfnamefont {R.}~\bibnamefont {Nagai}}, \bibinfo {author} {\bibfnamefont {T.}~\bibnamefont {Tomono}},\ and\ \bibinfo {author} {\bibfnamefont {Y.}~\bibnamefont {Minato}},\ }\bibfield  {title} {\bibinfo {title} {Simulation of continuous-variable quantum systems with tensor network},\ }in\ \href {https://doi.org/10.1109/QCE52317.2021.00067} {\emph {\bibinfo {booktitle} {2021 IEEE International Conference on Quantum Computing and Engineering (QCE)}}}\ (\bibinfo {year} {2021})\ pp.\ \bibinfo {pages} {437--438}\BibitemShut {NoStop}%
\bibitem [{\citenamefont {Braun}\ and\ \citenamefont {Georgeot}(2006)}]{Braun2006}%
  \BibitemOpen
  \bibfield  {author} {\bibinfo {author} {\bibfnamefont {D.}~\bibnamefont {Braun}}\ and\ \bibinfo {author} {\bibfnamefont {B.}~\bibnamefont {Georgeot}},\ }\bibfield  {title} {\bibinfo {title} {Quantitative measure of interference},\ }\href {https://doi.org/10.1103/PhysRevA.73.022314} {\bibfield  {journal} {\bibinfo  {journal} {Phys. Rev. A}\ }\textbf {\bibinfo {volume} {73}},\ \bibinfo {pages} {022314} (\bibinfo {year} {2006})}\BibitemShut {NoStop}%
\bibitem [{\citenamefont {Hillery}(2016)}]{Hillery2016}%
  \BibitemOpen
  \bibfield  {author} {\bibinfo {author} {\bibfnamefont {M.}~\bibnamefont {Hillery}},\ }\bibfield  {title} {\bibinfo {title} {Coherence as a resource in decision problems: The deutsch-jozsa algorithm and a variation},\ }\href {https://doi.org/10.1103/PhysRevA.93.012111} {\bibfield  {journal} {\bibinfo  {journal} {Phys. Rev. A}\ }\textbf {\bibinfo {volume} {93}},\ \bibinfo {pages} {012111} (\bibinfo {year} {2016})}\BibitemShut {NoStop}%
\bibitem [{\citenamefont {Shi}\ \emph {et~al.}(2017)\citenamefont {Shi}, \citenamefont {Liu}, \citenamefont {Wang}, \citenamefont {Yang}, \citenamefont {Yang},\ and\ \citenamefont {Fan}}]{Shi2017}%
  \BibitemOpen
  \bibfield  {author} {\bibinfo {author} {\bibfnamefont {H.-L.}\ \bibnamefont {Shi}}, \bibinfo {author} {\bibfnamefont {S.-Y.}\ \bibnamefont {Liu}}, \bibinfo {author} {\bibfnamefont {X.-H.}\ \bibnamefont {Wang}}, \bibinfo {author} {\bibfnamefont {W.-L.}\ \bibnamefont {Yang}}, \bibinfo {author} {\bibfnamefont {Z.-Y.}\ \bibnamefont {Yang}},\ and\ \bibinfo {author} {\bibfnamefont {H.}~\bibnamefont {Fan}},\ }\bibfield  {title} {\bibinfo {title} {Coherence depletion in the grover quantum search algorithm},\ }\href {https://doi.org/10.1103/PhysRevA.95.032307} {\bibfield  {journal} {\bibinfo  {journal} {Phys. Rev. A}\ }\textbf {\bibinfo {volume} {95}},\ \bibinfo {pages} {032307} (\bibinfo {year} {2017})}\BibitemShut {NoStop}%
\end{thebibliography}%
%-----------------------------------------------------------------------------------------------------%
\newpage
\onecolumngrid
\appendix

\section{Proof of Theorem \ref{theo:classical_simulation_expectation_values_orthogonal}}\label{appendixsec:proof_classical_simulation_orthogonal}
\noindent We restate the Theorem first:
\begin{theo}
    Given the description of an $m$-mode input state $\rho$ and an $m$-mode unitary $\hat{O}_{\bm{\gamma}}$ given by
    \begin{equation}
    \hat{O}_{\bm{\gamma}} = \hat{O}_t e^{i\gamma_t\hat{q}_1^3} \hat{O}_{t-1} \dots \hat{O}_1 e^{i\gamma_1\hat{q}_1^3}\hat{O}_0,
    \end{equation}
    evolving the input state, where $\hat{O}_1,\dots,\hat{O}_{t-1}$ are displaced orthogonal gates, the expectation value of an operator $H(\bm{\hat{q}},\bm{\hat{p}})$, polynomial of degree $d$ in the quadrature operators, given by 
\begin{equation}
    \Tr[\hat{O}_{\bm{\gamma}}\rho\hat{O}_{\bm{\gamma}}^\dagger H(\bm{\hat{q}},\bm{\hat{p}}) ] = \Tr[\rho\hat{O}_{\bm{\gamma}}^\dagger H(\bm{\hat{q}},\bm{\hat{p}}) \hat{O}_{\bm{\gamma}}],
\end{equation}
can be computed in time $\mathcal{O}(m^{3d} + t^2 m^6)$.
\end{theo}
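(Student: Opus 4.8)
The plan is to compute the back-propagated observable $\hat{O}_{\bm{\gamma}}^\dagger H(\bm{\hat q},\bm{\hat p})\hat{O}_{\bm{\gamma}}$ explicitly as a polynomial in the quadratures and then read off the expectation value from the moments of $\rho$. Since conjugation by a unitary is an algebra homomorphism, I would first observe that
\begin{equation}
\hat{O}_{\bm{\gamma}}^\dagger H(\bm{\hat q},\bm{\hat p})\hat{O}_{\bm{\gamma}}=H\!\left(\hat{O}_{\bm{\gamma}}^\dagger\hat q_1\hat{O}_{\bm{\gamma}},\dots,\hat{O}_{\bm{\gamma}}^\dagger\hat p_m\hat{O}_{\bm{\gamma}}\right),
\end{equation}
so that it suffices to back-propagate the $2m$ individual quadrature operators and substitute them into $H$, respecting the operator ordering inside each monomial. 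The whole problem thus reduces to (i) computing each back-propagated quadrature efficiently and (ii) controlling the size of the polynomial obtained after substitution.

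For step (i) I would establish the structural invariant that is the crux of the argument: in the absence of symplectic coherence the degree of a back-propagated quadrature never exceeds two. Concretely, using Eq.~(\ref{eq:evo_cubic}) together with the block-diagonal action of the displaced orthogonal gates, a back-propagated position quadrature remains an affine-linear combination of position quadratures (orthogonal gates keep positions among positions, and the cubic phase gate fixes $\hat q$), while a back-propagated momentum quadrature stays of the form ``linear in the momenta plus quadratic in the positions,'' with no cross terms: the orthogonal gates map this form to itself, and each cubic phase gate only adds a term $3\gamma_k\hat q_1^2$ to the momentum part, which is degree two in the positions and is never re-amplified by a later cubic gate since those act trivially on $\hat q$. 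This is precisely where the hypothesis matters, and I expect it to be the main obstacle: in the presence of rotations mixing $\hat q$ and $\hat p$, the degree-two terms would acquire momentum dependence that the next cubic gate turns into degree-four position dependence, doubling the degree at each layer and producing the blow-up in degree and coefficient size noted in the text. To make the computation efficient I would organize it as a sum over $t$ computational paths as in Theorem~\ref{theo:t-1_orthogonal_reduction}, each path carrying a single cubic phase gate and a product of the orthogonal backbones $\hat{O}_{i\dots j}$; precomputing the orthogonal prefix products and propagating the $O(m^2)$ coefficients of each quadratic form through the linear substitutions yields each of the $2m$ back-propagated quadratures in $\poly(t,m)$ time, accounting for the $t^2m^6$ term.

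For step (ii) I would bound the substitution cost. Each of the $O(m^d)$ degree-$d$ monomials of $H$ becomes a product of at most $d$ polynomials, each of degree $\le 2$ with $O(m^2)$ terms, so expanding it produces $O(m^{2d})$ terms, and the full back-propagated observable is a polynomial of degree $\le 2d$ with $O(m^{3d})$ monomials; collecting them costs $O(m^{3d})$, the remaining term in the claimed complexity. Finally I would reorder this polynomial into a fixed ordering using $[\hat q_i,\hat p_j]=2i\delta_{ij}$ --- which only generates lower-degree terms and keeps the count within $O(m^{3d})$ --- and evaluate $\Tr[\rho\,\cdot\,]$ term by term from the moments of $\rho$ supplied in its description, giving the total $\mathcal{O}(m^{3d}+t^2m^6)$. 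The only genuinely delicate point is the degree-two invariant of step (i); once that is in place, both the path decomposition and the final assembly are routine bookkeeping.
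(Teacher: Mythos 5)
Your proposal is correct and follows essentially the same route as the paper's proof: back-propagate the $2m$ individual quadratures via the sum over $t$ computational paths (the degree-two invariant for momenta, "linear in momenta plus quadratic in positions," is exactly the content of the paper's Theorem~\ref{theo:t-1_orthogonal_reduction} and Lemma~\ref{lem:quad_evo_poly}), then substitute into $H$ and count $\mathcal{O}(m^d)$ monomials times $\mathcal{O}(m^{2d})$ terms each, giving $\mathcal{O}(m^{3d}+t^2m^6)$. The identification of the degree-two invariant as the crux, and of its failure under rotations as the source of the doubly exponential blow-up, matches the paper's reasoning.
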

\begin{proof}
    We first give the theorem for the quadrature back-propagation technique described in the main text:
    \begin{theo}\label{theo:t-1_orthogonal_reduction}
    With $\hat{r}\in \{\hat{q}_1,\hat{p}_1,\dots,\hat{q}_m,\hat{p}_m\}$ and $\hat{O}_0,\dots,\hat{O}_t$ being $m$-mode displaced orthogonal unitary gates, the evolution of $\hat{r}$ over $\hat{O}_te^{i\gamma_t \hat{q_1}^3} \hat{O}_{t-1} \dots \hat{O}_1 e^{i \gamma_1 \hat{q}_1^3} \hat{O}_0$ can be seen as the weighted sum of its evolution over $t$ unitary gates with the displaced orthogonal unitary gates and a single (modified) cubic phase gate (Figure \ref{fig:t-1_orthogonal_reduction}). That is,
        \begin{equation}
        \hat{O}_0^\dagger e^{-i \gamma_1 \hat{q}_1^3} \hat{O}_1^\dagger \dots\hat{O}_{t-1}^\dagger e^{-i\gamma_t \hat{q_1}^3} \hat{O}_t^\dagger \hat{r} \hat{O}_t e^{i\gamma_t \hat{q_1}^3} \hat{O}_{t-1} \dots \hat{O}_1 e^{i \gamma_1 \hat{q}_1^3} \hat{O}_0 = \sum_{i=1}^t \frac{\gamma_i}{\gamma} \left( \hat{O}_{i-1\dots0}^\dagger e^{-i \gamma \hat{q}_1^3}\hat{O}_{t \dots i}^\dagger \hat{r} \hat{O}_{t \dots i} e^{i\gamma\hat{q}_1^3} \hat{O}_{i-1\dots0} \right),
    \end{equation}
where we have defined
\begin{equation}\label{eqn:brevity_orthogonal}
    \hat{O}_{j\dots i} := \hat{O}_j \dots \hat{O}_i,
\end{equation}
for $i,j \in \{1,\dots,t \}$ and $j>i$, and $\gamma := \sum_{i=1}^t \gamma_i$.
\end{theo}
\begin{proof}
    We will prove that this relation holds for position and momentum quadratures separately.

For $\hat{q} \in \{ \hat{q}_1,\dots,\hat{q}_m\}$,
\begin{equation}
    (\hat{O}_0^\dagger e^{-i \gamma_1 \hat{q}_1^3} \hat{O}_1^\dagger \dots e^{-i\gamma_{t-1}\hat{q}_1^3} \hat{O}_{t-1}^\dagger e^{-i\gamma_t \hat{q_1}^3} \hat{O}_t^\dagger) \hat{q} (\hat{O}_t e^{i\gamma_t \hat{q_1}^3} \hat{O}_{t-1} e^{i\gamma_{t-1}\hat{q}_1^3}\dots \hat{O}_1 e^{i \gamma_1 \hat{q}_1^3} \hat{O}_0) = \hat{O}_{t \dots 0}^\dagger \hat{q} \hat{O}_{t \dots 0},
\end{equation}
since the position quadratures are unaffected by the cubic phase gates and the orthogonal gates mixes the position quadratures with position quadratures. Writing
\begin{equation}
    \hat{O}_{t \dots 0}^\dagger \hat{q} \hat{O}_{t \dots 0} = \frac1{\gamma}\sum_{i=1}^t \gamma_i \hat{O}_{t \dots 0}^\dagger \hat{q} \hat{O}_{t \dots 0},
\end{equation}
the Theorem is proved for $\hat{q}$. 

Let $S_i$ be the $2m \times 2m$ symplectic matrix that characterizes the transformation of quadratures under $\hat{O}_i$. The entry $(S_i)_{q_i,p_j}$ represents the coefficient of $\hat{p}_j$ in the transformation of $\hat{q}_i$ under $\hat{O}_i$, with similar interpretations for other quadratures. Then, given $\hat{p} \in \{ \hat{p}_1,\dots,\hat{p}_m\}$, its evolution follows as
\begin{eqnarray}
    &&(\hat{O}_0^\dagger e^{-i \gamma_1 \hat{q}_1^3} \hat{O}_1^\dagger \dots e^{-i\gamma_{t-1}\hat{q}_1^3}\hat{O}_{t-1}^\dagger e^{-i\gamma_t \hat{q_1}^3} \hat{O}_t^\dagger) \hat{p} (\hat{O}_t e^{i\gamma_t \hat{q_1}^3} \hat{O}_{t-1} e^{i\gamma_{t-1}\hat{q}_1^3} \dots \hat{O}_1 e^{i \gamma_1 \hat{q}_1^3} \hat{O}_0) \nonumber \\
    &&\hspace{5mm} = (\hat{O}_0^\dagger e^{-i \gamma_1 \hat{q}_1^3} \hat{O}_1^\dagger \dots e^{-i\gamma_{t-1}\hat{q}_1^3}\hat{O}_{t-1}^\dagger)( \hat{O}_t^\dagger \hat{p} \hat{O}_t + 3\gamma_t (S_t)_{p,p_1} \hat{q}_1^2) (\hat{O}_{t-1} e^{i\gamma_{t-1}\hat{q}_1^3} \dots \hat{O}_1 e^{i \gamma_1 \hat{q}_1^3} \hat{O}_0) \nonumber \\
    &&\hspace{5mm} = (\hat{O}_0^\dagger e^{-i \gamma_1 \hat{q}_1^3} \hat{O}_1^\dagger \dots \hat{O}_{t-2}^\dagger) (\hat{O}_{t\dots t-1}^\dagger \hat{p} \hat{O}_{t\dots t-1} + 3\gamma_{t-1} (S_{t-1} S_t)_{p,p_1} \hat{q}_1^2) (\hat{O}_{t-2} \dots \hat{O}_1 e^{i \gamma_1 \hat{q}_1^3} \hat{O}_0) + 3\gamma_t (S_t)_{p,p_1} \hat{O}_{t-1\dots0}^\dagger \hat{q}_1^2 \hat{O}_{t-1\dots0}\nonumber\\
    &&\hspace{50mm} \vdots \nonumber\\
    &&\hspace{5mm}= \hat{O}_{t\dots0}^\dagger \hat{p} \hat{O}_{t\dots0} + \sum_{i=1}^t 3 \gamma_i (S_i\dots S_t)_{p,p_1} \hat{O}_{i-1\dots0}^\dagger \hat{q}_1^2 \hat{O}_{i-1\dots0}.
\end{eqnarray}
Since the cubic phase gate affects only the \(\hat{p}_1\) quadrature, transforming it as \(\hat{p}_1 \to \hat{p}_1 + 3\gamma\hat{q}_1^2\), and \(\hat{q}_1^2\) remains unchanged by subsequent cubic phase gates. Therefore, the resulting evolution remains quadratic in the position quadratures. Writing 
\begin{equation}
    \hat{O}_{t\dots0}^\dagger \hat{p} \hat{O}_{t\dots0} = \frac1{\gamma}{\sum_{i=1}^t \gamma_i \hat{O}_{t\dots0}^\dagger \hat{p} \hat{O}_{t\dots0}},
\end{equation}
and grouping $\frac{\gamma_i}{\gamma} \hat{O}_{t\dots0}^\dagger \hat{p} \hat{O}_{t\dots0}$ with $3 \gamma_i (S_i\dots S_t)_{p,p_1} \hat{O}_{i-1\dots0}^\dagger \hat{q}_1^2 \hat{O}_{i-1\dots0}$ for all $i\in\{1,\dots,t\}$, we get the statement of the Theorem.
\end{proof}
\noindent The quadrature back-propagation technique allows us to evolve the single-degree quadratures in $H(\boldsymbol{\hat{q}},\boldsymbol{\hat{p}})$ as degree 2 polynomials. This is formalized in the following Lemma:
\begin{lem}\label{lem:quad_evo_poly}
    Given an $m$-mode unitary 
    \begin{equation}\label{eqn:U_t}
        \hat{O}_{\bm{\gamma}} = \hat{O}_t e^{i\gamma_t\hat{q}_1^3} \hat{O}_{t-1} \dots \hat{O}_1 e^{i\gamma_1\hat{q}_1^3} \hat{O}_0,
    \end{equation} 
    such that $\hat{O}_0,\dots,\hat{O}_{t}$ are $m$-mode displaced orthogonal unitary gates, the evolution of a quadrature $\hat{r} \in \{\hat{q}_1,\dots,\hat{q}_m,\hat{p}_1,\dots,\hat{p}_m\}$ over $\hat{O}_{\bm{\gamma}}$ is described by a quadratic degree polynomial $f_{\hat{r}}^{(2)}(\bm{\hat{q}},\bm{\hat{p}},\hat{O}_{\bm{\gamma}})$, which can be computed in time $\mathcal{O}(t^2 m^5)$.
\end{lem}
\begin{proof}
    Lemma \ref{lem:quad_evo_poly} follows from the fact that a quadrature evolution according to Theorem \ref{theo:t-1_orthogonal_reduction} simplifies to a quadrature evolution in $t$ circuits with $(t+1)$ orthogonal unitary gates and a single cubic phase gate. The evolution of the quadrature along each of these simplified circuits is described as a degree two polynomial with $\mathcal{O}(m^2)$ terms. Given the description of $\hat{O}_0,\dots,\hat{O}_t$, the description of $\hat{O}_i \times \dots \times \hat{O}_j$ can be calculated in time $\mathcal{O}((j-i)m^3)$. Since each of the $t$ simplified circuits have $t+1$ orthogonal unitary gates, this implies that the degree 2 polynomial along each of the paths can be given in time $\mathcal{O}((t+1)m^3)$. Further, since we have to sum the coefficient of each of the $\mathcal{O}(m^2)$ terms in $t$ such polynomials, this gives the time complexity of computing $f_{\hat{r}}^{(2)}(\bm{\hat{q}},\bm{\hat{p}},\hat{O}_{\bm{\gamma}})$ to be of the order $\mathcal{O}(t(t+1)m^5) \sim \mathcal{O}(t^2 m^5)$ upto lower degree polynomial factors in $t$.
\end{proof}
With Lemma \ref{lem:quad_evo_poly}, Theorem \ref{theo:classical_simulation_expectation_values_orthogonal} can be easily seen by noting that with the quadrature back-propagation picture, the single degree quadratures involved in the description of $H(\bm{\hat{q}},\bm{\hat{p}})$ can be evolved independently as quadratic degree polynomials in time $\mathcal{O}(t^2 m^6)$ which corresponds to the time calculating the evolution of each $\hat{q}_1,\dots,\hat{q}_m,\hat{p}_1,\dots,\hat{p}_m$, as all of them may be involved in the description of $H(\boldsymbol{q},\boldsymbol{p})$. Then we need to multiply $\mathcal{O}(m^{2d})$ terms to get the evolution of terms of degree $d$ in $H(\boldsymbol{q},\boldsymbol{p})$, and similarly for terms of degree $0,1,2,\dots,d-1$. The general form of $H(\bm{\hat{q}},\bm{\hat{p}})$ is
\begin{equation}
    \sum_{\alpha_1 + \alpha_2 + \dots + \alpha_m + \beta_1 + \dots + \beta_m \leq d} \hat{q}_1^{\alpha_1}\dots\hat{q}_m^{\alpha_m} \hat{p_1}^{\beta_1}\dots\hat{p_m}^{\beta_m},
\end{equation}
with $\alpha_1,\dots,\alpha_m,\beta_1,\dots,\beta_m \in \N$. Note that any other ordering of the operators can be put in this form by the application of the appopriate commutation relation. This gives the total number of terms as
\begin{equation}
    \sum_{n=0}^d \binom{2m+n-1}{n} = \binom{2m + d}{d} = \mathcal{O}(m^d),
\end{equation}
where the right hand side follows from the hockey-stick identity. Therefore we have at most $\mathcal{O}(m^d)$ terms and each one of them can be described in time at most $\mathcal{O}(m^{2d})$. This gives the total time to describe $\hat{O}_{\bm{\gamma}}^\dagger H(\bm{\hat{q}},\bm{\hat{p}}) \hat{O}_{\bm{\gamma}}$ to be $\mathcal{O}(m^{3d} + t^2 m^6)$. And since we have the efficient description of the input state $\rho$, this also gives the time complexity of computing its expectation value.
\end{proof}

\section{Proof of Theorem \ref{theo:classical_simulation_expectation_values_small_symp}}\label{appendixsec:proof_symp}
\noindent We restate Theorem \ref{theo:classical_simulation_expectation_values_small_symp}:
\begin{theo}
    Given the description of an $m$-mode input state $\rho$ and an $m$-mode unitary $\hat{U}_{t}$ given by
    \begin{equation}
    \hat{U}_t = \hat{O}_{\bm{\gamma}_c} \hat{R}_1(\theta_c) \hat{O}_{\bm{\gamma}_{c-1}}\dots \hat{R}_1(\theta_1) \hat{O}_{\bm{\gamma}_0},
    \end{equation}
    evolving the input state, where $\hat{O}_{\bm{\gamma}_i}$ describes $t + 1$ displaced orthogonal gates interleaved with $t$ cubic phase gates, and $\hat{R}_1(\theta_i)$ are single-mode rotation gates in the first mode $\forall i \in \{0,1,\dots,c \}$, the expectation value of an operator $H(\bm{\hat{q}},\bm{\hat{p}})$, polynomial of degree $d$ in the quadrature operators, given by 
\begin{equation}
    \Tr[\hat{U}_t\rho\hat{U}_t^\dagger H(\bm{\hat{q}},\bm{\hat{p}}) ] = \Tr[\rho\hat{U}_t^\dagger H(\bm{\hat{q}},\bm{\hat{p}}) \hat{U}_t],
\end{equation}
can be computed in time $\mathcal{O}(m^{d2^{c+1}}+(c+1)t^2 m^7)$.
\end{theo}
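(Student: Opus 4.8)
The plan is to reduce the computation of $\Tr[\rho\,\hat U_t^\dagger H(\bm{\hat q},\bm{\hat p})\hat U_t]$ to the back-propagation of individual quadrature operators, exactly as in the proof of Theorem~\ref{theo:classical_simulation_expectation_values_orthogonal}. Writing $H$ in the canonical ordered form $\sum \hat q_1^{\alpha_1}\cdots \hat p_m^{\beta_m}$ with $\sum_i(\alpha_i+\beta_i)\le d$, conjugation by $\hat U_t$ is an algebra homomorphism, so $\hat U_t^\dagger H\hat U_t$ is obtained by back-propagating each single quadrature $\hat r\in\{\hat q_1,\dots,\hat q_m,\hat p_1,\dots,\hat p_m\}$ independently and then multiplying the resulting polynomials monomial by monomial. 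Thus the central task is to compute, for each $\hat r$, the polynomial describing $\hat U_t^\dagger \hat r\,\hat U_t$, after which the reconstruction of $H$ and the pairing with the (efficiently described) input state $\rho$ are routine.

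The back-propagation of a single quadrature is carried out one layer at a time, exploiting the block structure $\hat U_t=\hat O_{\bm\gamma_c}\hat R_1(\theta_c)\cdots\hat R_1(\theta_1)\hat O_{\bm\gamma_0}$, which consists of $c+1$ orthogonal-plus-cubic blocks separated by $c$ single-mode rotations. Starting from the outermost block, I would back-propagate the current polynomial through each $\hat O_{\bm\gamma_i}$ by applying Lemma~\ref{lem:quad_evo_poly} to every quadrature factor: each factor is replaced by the degree-two polynomial produced by the path-sum identity of Theorem~\ref{theo:t-1_orthogonal_reduction}, which collapses the $t$ internal cubic-phase paths into a single explicit polynomial. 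Since a degree-$k$ monomial is a product of $k$ factors, each becoming degree two, a block at most doubles the degree. Between blocks, back-propagation through $\hat R_1(\theta_i)$ is the linear substitution $\hat q_1\mapsto \cos\theta_i\,\hat q_1-\sin\theta_i\,\hat p_1$, $\hat p_1\mapsto \sin\theta_i\,\hat q_1+\cos\theta_i\,\hat p_1$, which preserves the degree but re-introduces $\hat p_1$ into the polynomial; this is precisely what allows the next block's cubic gates (acting as $\hat p_1\mapsto \hat p_1+3\gamma\hat q_1^2$) to raise the degree again. Tracking the degree through the circuit gives the growth $1\to2\to4\to\cdots\to2^{c+1}$ after the $c+1$ blocks.

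For the complexity, the key point is that the path-sum of Theorem~\ref{theo:t-1_orthogonal_reduction} is re-collapsed into an explicit polynomial at the end of every block, so the $t$ paths contributed by successive blocks add rather than multiply: the total number of computational paths is $(c+1)\,t$ rather than $t^{c+1}$. A back-propagated quadrature is then a polynomial of degree $2^{c+1}$ in the $2m$ variables, having $\mathcal{O}(m^{2^{c+1}})$ monomials, and accounting for the per-layer polynomial arithmetic together with the path-sum costs of Lemma~\ref{lem:quad_evo_poly} accumulated over the $c+1$ blocks yields a per-quadrature cost of $\mathcal{O}(m^{3\cdot2^{c}}+(c+1)t^2 m^7)$. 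Reconstructing $\hat U_t^\dagger H\hat U_t$ from the $2m$ back-propagated quadratures requires multiplying up to $d$ such degree-$2^{c+1}$ polynomials for each of the $\mathcal{O}(m^d)$ monomials of $H$, producing polynomials of degree $d\,2^{c+1}$ and contributing the $\mathcal{O}(m^{d2^{c+1}})$ term, for the total $\mathcal{O}(m^{d2^{c+1}}+(c+1)t^2 m^7)$.

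I expect the main obstacle to be the bookkeeping that keeps the path count additive rather than multiplicative across blocks: naively composing the path expansions of all $c+1$ blocks would give $t^{c+1}$ terms, and it is the repeated collapse afforded by Theorem~\ref{theo:t-1_orthogonal_reduction} into a single polynomial at each block boundary that averts this. A secondary technical care is that substituting non-commuting quadrature polynomials and re-expressing the products in canonical order invokes the commutation relations, generating lower-degree corrections; these must be shown not to exceed the degree $2^{c+1}$ bound, which holds because commutators only lower the total degree.
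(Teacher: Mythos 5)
Your proposal is correct and follows essentially the same route as the paper's proof: back-propagate each single quadrature block by block, using the path-sum collapse of Theorem~\ref{theo:t-1_orthogonal_reduction} at every block boundary so that the path count stays at $(c+1)t$, track the degree doubling $1\to 2\to\cdots\to 2^{c+1}$ driven by the rotation reintroducing $\hat p_1$ before each cubic layer, and then multiply the $2m$ back-propagated polynomials to reconstruct $H$. The only (immaterial) deviation is that you quote the per-quadrature cost with an $m^7$ where the paper's Lemma~\ref{lem:quad_evo_low_symp} has $m^6$, the extra factor of $m$ arising only when all $2m$ quadratures are processed; the final bound is unaffected.
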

\begin{proof}
    The application of Lemma \ref{lem:quad_evo_poly} and Theorem \ref{theo:classical_simulation_expectation_values_orthogonal} for the evolution of a single -degree quadrature over each of $\hat{O}_{\bm{\gamma}_i}$ gives the following Lemma:
    \begin{lem}\label{lem:quad_evo_low_symp}
    Given an $m$-mode unitary 
    \begin{equation}
        \hat{U}_t = \hat{O}_{\boldsymbol{\gamma}_c} \hat{R}_1(\theta_c) \hat{O}_{\boldsymbol{\gamma}_{c-1}}\dots \hat{R}_1(\theta_1) \hat{O}_{\boldsymbol{\gamma}_0},
    \end{equation} 
    such that $\hat{O}_{\boldsymbol{\gamma}_i}$ describes $t + 1$ displaced orthogonal unitary gates interleaved with $t$ cubic phase gates, and $\hat{R}_1(\theta_i)$ are single-mode rotation gates in the first mode $\forall i \in \{0,1,\dots,c \}$, the evolution of a quadrature $\hat{r} \in \{\hat{q}_1,\dots,\hat{q}_m,\hat{p}_1,\dots,\hat{p}_m\}$ over $\hat{U}_t$ is described by a polynomial of degree $2^{c+1}$ and number of terms $\mathcal{O}(m^{2^{c+1}})$, whose description can be given in time $\mathcal{O}(m^{3*2^{c}}+(c+1)t^2 m^6)$
\end{lem}
\begin{proof}This Lemma follows from the iterative structure of the evolution. Starting with quadrature operator $\hat{r} \in \{\hat{q}_1,\dots,\hat{q}_m,\hat{p}_1,\dots,\hat{p}_m\}$, the application of $\hat{O}_{\boldsymbol{\gamma}_c}$ using the quadrature backpropagation technique (Lemma \ref{lem:quad_evo_poly}) transforms it into a polynomial of degree at most two with $\mathcal{O}(m^2)$ terms in the quadratures where the quadratic terms involve only the position quadratures. Using Theorem \ref{theo:classical_simulation_expectation_values_orthogonal}, the description of this polynomial can be computed in time $\mathcal{O}(t^2 m^6 + m^3)$. Applying the rotation $\hat{R}_1(\theta_c)$ to this polynomial produces another polynomial of degree two with $\mathcal{O}((m+1)^2)$ terms. This transformation is computationally efficient since it involves a single rotation gate, and crucially, it introduces a quadratic term of the form $\hat{p}_1^2$ by transforming the quadrature $\hat{q}_1^2$ as $(\cos(\theta_c)\hat{q}_1 + \sin(\theta_c)\hat{p}_1)^2$. 

Therefore, applying $\hat{O}_{\boldsymbol{\gamma}_{c-1}}$ to this polynomial results in an evolution where the previously introduced $\hat{p}_1^2$ transforms as $(\hat{p}_1 + 3 \gamma\hat{q}_1^2)^2$, leading to a polynomial of degree four, with quadratic terms involving only the position quadratures, as the subsequent displaced orthogonal unitaries acting on $\hat{q}_1^4$ do not mix position quadratures with momentum quadratures. Using Theorem \ref{theo:classical_simulation_expectation_values_orthogonal}, the description of this polynomial can be obtained in time $\mathcal{O}(t^2 m^6 + m^6)$. The application of $\hat{R}_1(\theta_{c-1})$ on the quartic term $\hat{q}_1^4$ then introduces a term of the form $\hat{p}_1^4$, and repeating the process with the next layer $\hat{O}_{\boldsymbol{\gamma}_{c-2}}$ results in a polynomial of degree eight. This pattern continues, with each additional layer doubling the polynomial degree. Therefore, by induction, the degree of the evolved single-degree quadrature $\hat{r}$ transforms as
\begin{eqnarray}
    1 \overset{\hat{O}_{\bm{\gamma}_c}}{\longrightarrow} 2 \overset{\hat{R}_1(\theta_c)\hat{O}_{\bm{\gamma}_{c-1}}}{\longrightarrow} 2^2 \overset{\hat{R}_1(\theta_{c-1})\hat{O}_{\bm{\gamma}_{c-2}}}{\longrightarrow} 2^3  \cdots \overset{\hat{R}_1(\theta_{1})\hat{O}_{\bm{\gamma}_{0}}}{\longrightarrow} 2^{c+1},
\end{eqnarray}
whereas the number of terms in the polynomial corresponding to the evolved form of $\hat{r}$ transform as
\begin{eqnarray}
    1 \overset{\hat{O}_{\bm{\gamma}_c}}{\longrightarrow} \mathcal{O}(m^2) \overset{\hat{R}_1(\theta_c)\hat{O}_{\bm{\gamma}_{c-1}}}{\longrightarrow}\mathcal{O}(m^{2^2})\overset{\hat{R}_1(\theta_{c-1})\hat{O}_{\bm{\gamma}_{c-2}}}{\longrightarrow} \mathcal{O}(m^{2^3}) \cdots
    \cdots \overset{\hat{R}_1(\theta_{1})\hat{O}_{\bm{\gamma}_{0}}}{\longrightarrow} \mathcal{O}(m^{2^{c+1}}).
\end{eqnarray}
Therefore, after $c+1$ layers, the single degree quadrature $\hat{r}$ evolves to a polynomial of degree $2^{c+1}$ in the quadratures, with the number of terms of the order $\mathcal{O}(m^{2^{c+1}})$.

Applying Theorem \ref{theo:classical_simulation_expectation_values_orthogonal} at each step, the total time required to describe the evolved quadrature after $(c+1)$ layers is given by $\mathcal{O}((c+1)t^2 m^6 + \sum_{i=0}^{c} m^{3 \cdot 2^i})$, which accounts for the cumulative time needed to describe the evolved polynomial after each of the $c+1$ layers. This expression simplifies to $\mathcal{O}(m^{3 \cdot 2^c} + (c+1)t^2 m^6)$, as the term $m^{3 \cdot 2^c}$ dominates due to the double-exponential growth in $m^{3 \cdot 2^i}$.
\end{proof}
\noindent The complexity of computing the expectation value of the $H(\bm{\hat q},\bm{\hat p})$ is determined by first evolving all $2m$ single degree quadratures into polynomials of degree $2^{c+1}$, each containing $\mathcal{O}(m^{2^{c+1}})$ terms. According to Lemma \ref{lem:quad_evo_low_symp}, this requires $\mathcal{O}(m ( m^{3 \cdot 2^c} + (c+1)t^2 m^6)) = \mathcal{O}(m^{3 \cdot 2^c +1} + (c+1)t^2 m^7)$ time. Next, describing an evolved quadrature of degree $d$ involves multiplying $\mathcal{O}(m^{d 2^{c+1}})$ terms, and $H(\bm{\hat q},\bm{\hat p})$ consists of at most $\mathcal{O}(m^d)$ terms of degree at most $d$. Combining these contributions, given the efficient description of $\rho$, the overall complexity of computing $\Tr[\rho\hat{U}_t^\dagger H(\bm{\hat q},\bm{\hat p}) \hat{U}_t]$ is given by $\mathcal{O}(m^{d + d 2^{c+1}} + m^{3 \cdot 2^c} + (c+1)t^2 m^7)$, which simplifies to $\mathcal{O}(m^{d 2^{c+1}} + (c+1)t^2 m^7)$, up to lower-degree polynomial factors in $m$.

\end{proof}
\end{document}